\newtheorem{theorem}{Theorem}
\newtheorem{proposition}[theorem]{Proposition}
\newtheorem{observation}[theorem]{Observation}
\newtheorem{lemma}[theorem]{Lemma}
\newtheorem{corollary}[theorem]{Corollary}
\newtheorem{conjecture}[theorem]{Conjecture}
\theoremstyle{remark}
\newtheorem*{remark}{Remark}
\newtheorem*{researchproblem}{Research Problem}
\theoremstyle{definition}
\DeclareMathOperator{\tr}{tr}
\DeclareMathOperator{\wt}{wt}
\DeclareMathOperator{\supp}{supp}
\newcommand{\ot}[0]{\otimes}
\newcommand{\nn}[0]{\nonumber}
\newcommand{\nhf}[0]{\lfloor \frac{n}{2} \rfloor}
\newcommand{\q}[0]{\quad}
\renewcommand{\AA}[0]{\mathcal{A}}
\newcommand{\BB}[0]{\mathcal{B}}
\newcommand{\CC}[0]{\mathcal{C}}
\newcommand{\HH}[0]{\mathcal{H}}
\newcommand{\QQ}[0]{\mathcal{Q}}
\newcommand{\UU}{\mathcal{U}}
\newcommand{\R}{\mathds{R}}
\newcommand{\C}{\mathds{C}}
\newcommand{\F}{\mathds{F}}
\newcommand{\one}[0]{\mathds{1}}
\newcommand{\bra}[1]{\mathinner{\langle #1|}}
\newcommand{\ket}[1]{\mathinner{|#1\rangle}}
\newcommand{\dyad}[1]{| #1\rangle \langle #1|}  % alias for \braket{}
\renewcommand{\a}{\alpha}
\renewcommand{\b}{\beta}
\def\QECC(#1,#2,#3,#4){(\!(#1,{#4}^{#2},#3)\!)_{#4}}
\def\stabQECC(#1,#2,#3,#4){[\![#1,#2,#3]\!]_{#4}}
\renewcommand{\r}{\varrho}
\newcommand{\overbar}[1]{\mkern 1.5mu\overline{\mkern-1.5mu#1\mkern-1.5mu}\mkern 1.5mu} % a wider & thicker overbar
\begin{document}

\title  {Quantum Codes of Maximal Distance and Highly Entangled Subspaces}
\date   {June 15, 2020}

\author {Felix Huber}
\affiliation{ICFO - The Institute of Photonic Sciences, 08860 Castelldefels (Barcelona), Spain}
\orcid{0000-0002-3856-4018}
\email      {felix.huber@icfo.eu}
\affiliation{Institut für Theoretische Physik, Universität zu Köln, 50937 Köln, Germany}
\affiliation{Naturwissenschaftlich-Technische Fakultät, Universität Siegen, 57068 Siegen, Germany}

\author	{Markus Grassl}
\orcid{0000-0002-3720-5195}
\affiliation{International Centre for Theory of Quantum Technologies, University of Gdansk, 80-308 Gda\'nsk, Poland}
\affiliation{Max Planck Institute for the Science of Light, 91058 Erlangen, Germany}
\email{markus.grassl@ug.edu.pl}

\begin{abstract}
We present new bounds on the existence of general quantum maximum
distance separable codes (QMDS): the length \(n\) of all QMDS codes 
with local dimension \(D\) and distance \(d \geq 3\) is bounded by
\(n \leq D^2 + d - 2\). We obtain their weight distribution and
present additional bounds that arise from Rains' shadow inequalities.
Our main result can be seen as a generalization of bounds that are
known for the two special cases of stabilizer QMDS codes and
absolutely maximally entangled states, and confirms the quantum MDS
conjecture in the special case of distance-three codes. As the
existence of QMDS codes is linked to that of highly entangled
subspaces (in which every vector has uniform \(r\)-body marginals) of
maximal dimension, our methods directly carry over to address
questions in multipartite entanglement.
 \end{abstract}
%% Keywords: quantum error correction, multipartite entanglement, quantum MDS codes.

\maketitle

\section{Introduction}
The processing of information with quantum particles is inevitably affected by disturbance from the environment.
By distributing the information onto many particles,
quantum error correcting codes (QECC) can safeguard quantum information from unwanted noise. 
In this way, a limited amount of corruption or even particle-loss can be tolerated.
Since the discovery of quantum error correction~\cite{PhysRevA.52.R2493, PhysRevLett.77.793}
and the establishment of its theoretical foundations~\cite{PhysRevA.55.900, PhysRevA.54.3824, PhysRevA.54.2629, 681315}, 
the search for ``good'' codes with desirable characteristics has been an ongoing endeavor. 
Both increasingly better-performing codes~\cite{doi:10.1063/1.1499754, PhysRevLett.97.180501, 6671468, Hastings2014, RevModPhys.87.307, Campbell2017} 
as well as stricter bounds imposed upon their existence have been found~\cite{761270, 7950914}.

The {\em quantum Singleton bound} can be seen as having its origins in the
no-cloning theorem~\cite{PhysRevA.56.1721, lecture_Gottesman}.
It states that the parameters of any quantum error correction code of distance~$d$,
encoding states from $\C^K$ into a subspace of $n$ systems with local dimensions $D$ each,
are bounded by
 \begin{equation}
  K \leq D^{n-2(d-1)}\,.
 \end{equation}
Codes achieving this bound are called quantum maximum distance
separable (QMDS) \cite{PhysRevA.55.900, 782103}.  Hugging the
fundamental limit of no-cloning, one can expect these codes to have
particularly intriguing features. 

The study of multipartite entanglement has led to the discovery of different types 
of entanglement that can be shared by three or more quantum particles~\cite{PhysRevA.62.062314, 1612.02437v2, Bengtsson_Zyczkowski_2017}. 
In turn, subspaces whose vectors show interesting entanglement properties have been investigated, 
such as those showing a bounded Schmidt rank~\cite{doi:10.1063/1.2862998}, 
having a negative partial transpose~\cite{PhysRevA.87.064302}, 
and being completely~\cite{1751-8121-41-37-375305, PhysRevA.90.062323} or genuinely entangled~\cite{
PhysRevA.98.012313, PhysRevA.100.062318, demianowicz2020approach}.

Generalizing the concept of maximal bipartite entanglement,
{\(r\)-uniform} states are a particular type of highly entangled pure
quantum states: these states exhibit maximal entanglement between any
\(r\) particles and the rest, in the sense that all of their \(r\)-sized
marginals are maximally mixed (i.e. uniform).  

It is reasonable to think that there are not too many states with this
property, and one might be tempted to ask the following question:
given a number \(n\) of \(D\)-level quantum systems, what is the
largest possible subspace in which every state vector is
\(r\)-uniform?  In other words, what is the dimension of the largest
possible {\em \(r\)-uniform subspace} (rUS), and by what methods can
this subspace be characterized?

It can be established that the concepts of so-called {\em pure} QECC
and rUS are in fact equivalent~\cite{PhysRevA.76.042309}.
Consequently, the attainable dimensions of both objects are
constrained by the quantum Singleton bound.  In this article, we will
focus on the case of QECC and rUS achieving this bound, that is, on
general QMDS codes and their corresponding highly entangled subspaces.  All of
our results can thus be seen as results concerning both coding and
entanglement theory, and we will use methods from quantum error
correction to answer questions in multipartite entanglement, and vice
versa.

While the quantum Singleton bound was one of the earliest bounds
obtained on quantum codes, not much more about the structural
properties of QMDS codes is known than what was already obtained by
Rains in Ref.~\cite{782103}.  Explicit constructions for stabilizer
QMDS codes from classical maximal distance separable codes
followed~\cite{quant-ph/0312164v1, 7282626, 5961827}, and QMDS codes
were later understood to constitute optimal ramp secret sharing
schemes~\cite{Helwig_Thesis}.

It turns out that there are stronger constraints for the existence of
QMDS codes than their parameters meeting the quantum Singleton bound,
the full set of which are not yet known.  Similarly, classical MDS
codes have been studied for more than half a century, but despite of
that, the exact conditions for their existence have not yet been
entirely resolved~\cite{Joyner:2011:SUP:2073646, Ball2012}.

In this article, we obtain two new bounds: first, we prove that for
{\em any} (both stabilizer and non-stabilizer) QMDS code [respectively
\((d-1)\)-uniform subspace satisfying the Singleton bound] to exist, 
the following condition has to be met,
 \begin{equation}
  n 	 \leq D^2  + d - 2\,.
 \end{equation}
The result is obtained by a systematic investigation of certain families of QMDS codes.
Second, we use Rains' shadow inequality 
to further restrict the allowed parameters in the case of small ``alphabets''. 
This can be seen as additional constraints that originate in the 
monogamy of entanglement~\cite{PhysRevA.98.052317}.

Furthermore, we derive the weight distribution of QMDS codes, a useful
tool for the analysis and characterization of codes; it is seen that
the weights are solely determined by the parameters of the code,
regardless how the code was constructed. Note that all quantum MDS
codes found to date are constructed from classical MDS codes, in
particular using the stabilizer theory.  Yet, even if one would find a
different construction for quantum MDS codes, their quantum weight
distribution has to match that of their classical MDS counterpart.
Hence it is an intriguing question whether or not there exist QMDS
codes that do not arise from any classical construction.

The structure of this article is as follows: connections between
quantum error correcting codes and highly entangled subspaces are
drawn in Sections~\ref{sect:QECC} and~\ref{sect:highly_ent_subspaces}.
Then, Sections~\ref{sect:weight_enum} and~\ref{sect:new_from_old}
introduce methods that are needed for the proofs that follow: the
machinery of quantum weight enumerators and descendance rules for pure
codes are presented.  Sections~\ref{sect:QMDS}
and~\ref{sect:QMDS_families} introduce quantum maximum distance
separable codes and the families formed thereof.  The weights of QMDS
codes are derived in Section~\ref{sect:QMDS_weights}.  This results in
bounds on the existence of QMDS codes
(Sections~\ref{sect:max_len_QMDS} and~\ref{sect:shadow_bounds}).  The
QMDS conjecture is treated in Section~\ref{sect:conjecture}, before
concluding in Section~\ref{sect:conclusions}.  The appendices contain
proofs of the quantum Singleton bound and an overview on previous
bounds for stabilizer QMDS codes and AME states.  This is followed by
detailed tables on known QMDS constructions and bounds on their
existence for small local dimensions.

\section{Quantum error correcting codes}\label{sect:QECC}
A quantum error correcting code \(\QQ= (\!(n,K,d)\!)_D\) is a
\(K\)-dimensional subspace of \(\!(\C^D)^{\ot n}\) such that every
error affecting at most \(d-1\) subsystems can either be detected or
acts trivially on the code, i.e., introduces at most a global phase
factor.  Here, the parameter $d$ is the distance and
a code with \(d \geq 2t+1\) allows to correct any error that 
affects up to \(t\) subsystems, e.g. the complete depolarization of any~$t$ subsystems.

Let us gently introduce some notation to make this precise: 
denote by \(\{e_a\colon a=0,\ldots,D^2-1\}\) an orthogonal operator basis for \(\C^D\) 
that includes the identity \(e_0 = \one\), such that \(\tr(e_a^\dag e_b) = D \delta_{ab}\).
By taking \(n\)-fold tensor products of elements in \(\{e_a\}\)
we obtain a so-called local error basis $\{E_{\bm{a}}\}$ on~\(\!(\C^D)^{\ot n}\)  
satisfying \(\tr(E_{\bm{a}}^\dag E_{\bm{b}}) = D^n \delta_{\bm{a}\bm{b}}\). 
The support of an error-operator~$E_{\bm{a}}$, that is, the subsystems it acts 
non-trivially on, is denoted by $\supp(E_{\bm{a}})$.
The weight of $E_{\bm{a}}$ is the size of its support, \(\wt(E_{\bm{a}}) = |\supp(E_{\bm{a}})|\).
Finally, let \(\{\ket{i_\QQ}\colon i=1,\ldots,K\}\) be a set of orthogonal unit vectors spanning \(\QQ\).
Then \(\Pi_\QQ = \sum \dyad{i_\QQ}\) is the projector onto the code space.

For \(\QQ\) to be a QECC with minimum distance~$d$, 
a necessary and sufficient criterion is for
\begin{equation}\label{def:code}
 \bra{i_\QQ} E_{\bm{a}} \ket{j_\QQ} = C(E_{\bm{a}})\, \delta_{ij}
\end{equation}
to hold for all pairs \(\ket{i_\QQ}, \ket{j_\QQ}\) and errors \(E_{\bm{a}}\)
of weight strictly less than~\(d\)~\cite{PhysRevLett.84.2525}.  Note
that \(C(E_{\bm{a}})\in\C\) is a constant that depends only on the specific error
\(E_{\bm{a}}\), but not on the vectors \(\ket{i_\QQ}\) and
\(\ket{j_\QQ}\). A code is called {\em pure} if \(C(E_{\bm{a}}) = \tr(E_{\bm{a}})
/ D^n=0\) for all \(E_{\bm{a}}\) with \(0<\wt(E_{\bm{a}}) < d\).  In other words, the
constant \(C(E_{\bm{a}})\) of pure codes vanishes for all non-trivial errors that
the code is designed to detect.

For one of the proofs that follow, we will also need an entropic
condition on quantum error correction: consider the purification of
$\r=\Pi_\QQ/K$ with a reference system~$R$ of
dimension~$K$, $\ket{\phi_\QQ} =\frac{1}{\sqrt{K}}
\sum_{i=1}^K\ket{i_R} \ot \ket{i_\QQ}$.  The von Neumann entropy of a
subsystem $I$ is given by $S(\varrho_I) = - \sum_i \lambda_i
\log(\lambda_i)$, where~$\lambda_i$ are the eigenvalues of the reduced
density matrix $\varrho_I$ for the subsystem~$I$.  For the code to have
distance~$d$, a necessary and sufficient condition is that $S_{RA} = S_R + S_A$
holds for every subsystem $A$ with $|A|<d$, that is, the reference system
$R$ and the subsystem $A$ are uncorrelated.  From the conditions on
equality in the strong subadditivity, an equivalent formulation is
that $\varrho_{RA} = \varrho_R \ot \varrho_A$ must hold.

%%%% Highly Entangled Subspaces
\section{Highly entangled subspaces}\label{sect:highly_ent_subspaces}
A pure state \(\ket{\phi}\), whose reductions onto \(r\) parties are all maximally mixed, 
is termed {\em \(r\)-uniform}. 
That is, \( \tr_{S^c}(\dyad{\phi}) \propto \one\) for every subset \(S \subseteq \{1,\dots,n\}\) 
of size \(|S| \leq r\), where \(S^c\) denotes its complement.
An {\em {\(r\)-uniform} subspace} (rUS) is a subspace of \(\!(\C^D)^{\ot n}\) 
in which every vector is at least \(r\)-uniform. In other words, every vector \(\ket{\phi}\) 
lying in an rUS satisfies that for all error operators with \(\supp(E_{\bm{a}}) \subseteq S\) 
where \(|S|\leq r\)
\begin{equation}
\bra{\phi} E_{\bm{a}} \ket{\phi} = \tr[ \underbrace{\tr_{S^c}(\dyad{\phi})}_{\propto \one} \tr_{S^c}(E_{\bm{a}})] = \tr(E_{\bm{a}}) / D^n \,.
\end{equation} 
Accordingly, \(\ket{\phi}\) is maximally entangled across any bipartition of size \(r\) vs. \(n-r\),
having the largest possible von Neumann entropy on the smaller reduction.

From the definition of a QECC in Eq.~\eqref{def:code}, it is not hard
to see that a pure code with parameters \( (\!(n,K,r+1)\!)_D\) implies
the existence of an \(r\)-uniform subspace of \( ( \C^D)^{\otimes n}\)
with dimension~\(K\).  In fact, the converse statement is also true:
the existence of an \(r\)-uniform subspace implies that of a pure QECC
of distance \(r+1\).  The proof is based on an equivalent condition
for a subspace \(\QQ\) to be a QECC, namely that the expectation value
 \begin{equation}\label{def:code2}
  \bra{\phi} E \ket{\phi} = C(E)
 \end{equation} 
is constant for all \(\ket{\phi}\) ranging over the subspace \(\QQ\)
and operators \(E\) with support on less than \(d\)~parties.  The
claim then follows by considering pure codes for which \(C(E) = 0\)
when $0<\wt(E) < d$.

The equivalence of Eq.~\eqref{def:code} and Eq.~\eqref{def:code2} has
already been established, and we sketch the
proof~\cite{quant-ph/0004072v1, PhysRevLett.83.648}: expanding
\(\ket{\phi}\) in the logical basis \(\{ \ket{i_\QQ} \}\) and~\(E\) in
a Hermitean error basis \(\{E_{\bm{a}}\}\), Eq.~\eqref{def:code} implies
Eq.~\eqref{def:code2}.  The converse can be established by defining
the inner product \( \langle v,w \rangle_{E_{\bm{a}}} := \bra{v}
(E_{\bm{a}}+\lambda\one) \ket{w}\), and its associated norm \(||v||_{E_{\bm{a}}} =
\sqrt{\langle v, v \rangle_{E_{\bm{a}}}} \), where \(\lambda\geq 0\) is
chosen such that \(E_{\bm{a}}+\lambda\one \geq 0\).  With the complex
polarization identity
\begin{align}
\langle v, w \rangle_{E_{\bm{a}}} = \frac{1}{4} \big( \phantom{i}||v+\phantom{i}w||_{E_{\bm{a}}}^2 &- \phantom{i}||v-\phantom{i}w||_{E_{\bm{a}}}^2 \nn\\ 
 + \, i ||v-iw||_{E_{\bm{a}}}^2 & - i ||v+iw||_{E_{\bm{a}}}^2 \big)
\end{align} 
and the decomposition of \(\ket{i_\QQ}\) and \(\ket{j_\QQ}\) into sum and differences 
(with and without a complex phase \(i\)) 
of two vectors \(\ket{\psi}, \ket{\phi} \in \QQ\), 
it is seen that Eq.~\eqref{def:code2} implies Eq.~\eqref{def:code}.
Therefore, the formulations of Eq.~\eqref{def:code} and Eq.~\eqref{def:code2} are equivalent.

Considering these two definitions for the case of pure codes, one arrives at the following observation.
\begin{observation}[Equivalence of pure QECC and rUS]\label{prop:eq_QMDS_rUS}
 The following objects are equivalent:\\
 \noindent 1.  a pure \((\!(n,K,d)\!)_D\) quantum error correcting code;\\
 \noindent 2.  a \((d-1)\)-uniform subspace in \((\C^D)^{\ot n}\) of dimension~\(K\).
\end{observation}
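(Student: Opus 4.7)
The plan is to exploit the equivalence between Eq.~\eqref{def:code} and Eq.~\eqref{def:code2} already sketched above, and observe that for a pure code the expectation-value formulation is literally the statement that every vector in $\QQ$ is $(d-1)$-uniform. There is thus no new substance beyond translating purity from coefficients $C(E)$ into marginals of state vectors.

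For the direction $1 \Rightarrow 2$, I would fix a pure $(\!(n,K,d)\!)_D$ code $\QQ$ and an arbitrary unit vector $\ket{\phi} \in \QQ$. By Eq.~\eqref{def:code2} together with purity, $\bra{\phi}E\ket{\phi} = \tr(E)/D^n$ for every error $E$ with $\wt(E) < d$, and this value vanishes whenever $0 < \wt(E) < d$ since each basis element $e_a$ with $a \neq 0$ is traceless. For any $S$ with $|S| \leq d-1$, I would then expand $\tr_{S^c}(\dyad{\phi})$ in the local operator basis $\{E_{\bm{a}}:\supp(E_{\bm{a}})\subseteq S\}$. Each non-identity Fourier coefficient equals an expectation value of the form $\bra{\phi}(E_{\bm{a}} \ot \one_{S^c})\ket{\phi}$ and hence vanishes; only the identity term survives, giving $\tr_{S^c}(\dyad{\phi}) = \one/D^{|S|}$. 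Thus $\ket{\phi}$ is $(d-1)$-uniform, and $\QQ$ has dimension $K$ by hypothesis.

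Conversely ($2 \Rightarrow 1$), suppose every vector of a $K$-dimensional subspace $\QQ \subseteq (\C^D)^{\ot n}$ is $(d-1)$-uniform. For every $\ket{\phi}\in\QQ$ and $E$ with $\wt(E) < d$, the very calculation displayed in the text right after the definition of an rUS yields $\bra{\phi}E\ket{\phi} = \tr(E)/D^n$. Since the right-hand side depends only on $E$, Eq.~\eqref{def:code2} holds with the constant $C(E) := \tr(E)/D^n$, and by the established equivalence with Eq.~\eqref{def:code}, $\QQ$ is a QECC of distance $d$. Purity is then automatic, since $C(E) = 0$ for all $E$ with $0 < \wt(E) < d$.

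The only technical ingredient is the bookkeeping in the forward direction: identifying each basis coefficient of $\tr_{S^c}(\dyad{\phi})$ with an expectation value of an extended operator on all $n$ sites. This is routine given the orthogonality relation $\tr(e_a^\dagger e_b) = D\delta_{ab}$ and its multiplicativity under tensor products, and I do not expect any conceptual obstacle beyond the equivalence of Eqs.~\eqref{def:code} and~\eqref{def:code2} that the text has already derived.
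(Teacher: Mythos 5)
Your proposal is correct and follows essentially the same route as the paper: both directions reduce to the equivalence of Eq.~\eqref{def:code} and Eq.~\eqref{def:code2} established in the surrounding text, with purity translated into \(C(E)=\tr(E)/D^n=0\) for nontrivial local errors and the marginal condition recovered via the standard operator-basis expansion. Your forward direction merely spells out in more detail the step the paper calls ``not hard to see,'' so there is no substantive difference.
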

 
Thus the question about the maximal dimension that an $r$-uniform
subspace can attain is one-to-one related to the maximal dimension of
pure codes.  In what follows we will mostly focus on pure codes, as
the corresponding results for uniform subspaces can simply be read off
Observation~\ref{prop:eq_QMDS_rUS}.

%%%% Weight Enumerator
\section{Weight enumerators}\label{sect:weight_enum}
We will make use of {\em weight enumerators}
in the proofs that follow. Their knowledge is not required to
understand the main result [Theorem~\ref{thm:QMDS_bound}] of this
article (if Theorem \ref{thm:reductions_of_codes} is accepted); in
that case this section can be skipped.

For classical codes, the weight enumerator counts the number of
codewords of a given Hamming weight.  Although there is no such direct
combinatorial interpretation of the quantum weight enumerator, it has
been shown \cite{PhysRevLett.78.1600, 681316} that quantum weight
enumerators are a useful tool for the characterization of quantum
codes and that they can, for example, be employed to determine their
distance, as well as to derive other properties of putative codes or
to show their non-existence.

Given a local error basis \(\{E_{\bm{a}}\}\) with \(\tr(E_{\bm{a}}^\dag E_{\bm{b}}) = D^n \delta_{\bm{a}\bm{b}}\), 
define the Shor-Laflamme weights of a code \(\QQ\) with associated projector \(\Pi_\QQ\) as~\cite{PhysRevLett.78.1600, 681316}
\begin{align}
 A_j(\Pi_\QQ) &= \sum_{\wt(E_{\bm{a}}) = j} \tr[E_{\bm{a}}^\dag \Pi_\QQ] \tr[E_{\bm{a}} \Pi_\QQ]\,, \label{eq:KLF_weight1}\\
 B_j(\Pi_\QQ) &= \sum_{\wt(E_{\bm{a}}) = j} \tr[E_{\bm{a}}^\dag \Pi_\QQ  E_{\bm{a}} \Pi_\QQ]\,. \label{eq:KLF_weight2}
\end{align}
The sum above is taken over all errors \(E_{\bm{a}}\) of weight~\(j\) in the basis.
Note that \(A_j = A_j(\Pi_\QQ)\) is simply the Hilbert-Schmidt norm of all correlations 
in the code that act on exactly \(j\)~parties non-trivially. 
Both \(A_j\) and \(B_j\) are non-negative quantities that are invariant under the action 
of local unitaries \(\UU_1 \ot \cdots \ot \UU_n\)~\footnote{
This can be seen from the fact that the purities \(\tr[\tr_{S^c}(\Pi_\QQ)^2]\) of reductions 
can be expressed in terms of the weights \(A_j\). 
In turn, the dual weights \(B_j\) can be expressed as linear combinations of \(A_j\); 
this follows from the quantum MacWilliams identity~\cite{PhysRevLett.78.1600, 681316, 1751-8121-51-17-175301}},
and thus do not depend on the specific orthonormal error basis chosen.

We will also need Rains' {\em unitary} weights~\cite{681316}, defined as
\begin{align}
 A'_j(\Pi_\QQ) &= \sum_{|S| =j} \tr [\tr_{S^c} (\Pi_\QQ ) \tr_{S^c} (\Pi_\QQ)]\,, \label{eq:unitary_weights1}\\
 B'_j(\Pi_\QQ) &= \sum_{|S| =j} \tr [\tr_{S} (\Pi_\QQ ) 	\tr_{S} (\Pi_\QQ)] \label{eq:unitary_weights2}\,,
\end{align} 
where the sum is over all subsets \(S \subseteq \{1,\dots,n\}\) of size~\(j\).
For readers familiar with measures in quantum information, 
these quantities are proportional to the average purities of suitably normalized reductions 
of size \(j\) and \(n-j\), respectively.\footnote{
If \(\Pi_\QQ\) is normalized to a quantum state \(\r = \Pi_\QQ/K\), 
\(A_j'(\varrho)\) is the sum over the purities of all \(j\)-body reductions.}
From the definition, \(A'_j = B'_{n-j}\).

A fine-graining of both types of weights will prove useful for later proofs:
\begin{align}
 \AA_S(\Pi_\QQ) &= \sum_{\supp(E_{\bm{a}}) = S} \tr[E_{\bm{a}}^\dag \Pi_\QQ] \tr[E_{\bm{a}} \Pi_\QQ]\,, \label{eq:fine_grained_KLF1}\\
 \BB_S(\Pi_\QQ) &= \sum_{\supp(E_{\bm{a}}) = S} \tr[E_{\bm{a}}^\dag \Pi_\QQ  E_{\bm{a}} \Pi_\QQ]\,. \label{eq:fine_grained_KLF2}\\
\AA'_S(\Pi_\QQ) &=  \quad \tr [\tr_{S^c} (\Pi_\QQ) \tr_{S^c} (\Pi_\QQ)]\,,  \label{eq:fine_grained_unitary1}\\
\BB'_S(\Pi_\QQ) &=  \quad \tr [\tr_{S} (\Pi_\QQ) \tr_{S} (\Pi_\QQ)]\,.  	\label{eq:fine_grained_unitary2}
\end{align}
These are simply the non-symmetrized versions of Eqs.~\eqref{eq:KLF_weight1} to \eqref{eq:unitary_weights2}
for a fixed subset~\(S\).

The following facts about the weights of codes are known~\cite{681316}: 
necessary and sufficient conditions for a projector \(\Pi\) of rank \(K\) to be a QECC of distance~\(d\) are
\begin{equation}\label{eq:code_cond_enum}
 K B_j(\Pi) = A_j(\Pi) \quad \text{for} \quad 0\leq j<d\,. 
\end{equation} 
These conditions can be restated in terms of the unitary
enumerators. The quantities \(A'_j\) and \(B'_j\) are linear functions
of the quantities \(A_{i}\) and \(B_{i}\) with \(i \leq j\)
respectively,\footnote{This can be established by writing the partial
  trace as a channel.  See Refs.~\cite{681316, 1751-8121-51-17-175301}
  for more details.}
\begin{align}\label{eq:unitary_decomposition}
  A'_j(\Pi) &= \sum_{i\leq j} D^{-j} \binom{n-i}{n-j} A_i(\Pi) \,, \\
  B'_j(\Pi) &= \sum_{i\leq j} D^{-j} \binom{n-i}{n-j} B_i(\Pi) \,.
\end{align}
(Note that this resembles the notion of binomial moments in \cite{AshikhminBarg99}.)
Hence the relations of Eq.~\eqref{eq:code_cond_enum} are equivalent to 
\begin{equation}\label{eq:code_cond_unitary_enum}
 K B'_j(\Pi) = A'_j(\Pi) \quad \text{for} \quad 0\leq j<d\,.
\end{equation} 
Generally, one has that \(K B_j \geq A_j\) and \(K B'_j \geq A'_j\) for all \(j\), while
\(KB_0 = A_0 = K^2\). 

Analogous relations hold for subsets.  Let \(T\) be a subset of size
less than~\(d\).  From the conditions for the image of a projector
to be a code subspace [Eq.~\eqref{eq:code_cond_enum}] it follows that
\(K \BB_T = \AA_T\), while generally \(K \BB_S \geq \AA_S\)
holds.\footnote{ For a detailed derivation of this fact, see
  Appendix~B in Ref.~\cite{1751-8121-51-17-175301}}  Similarly, it
can be seen that \(K \BB'_T = \AA'_T\) holds, while \(K \BB'_S \geq
\AA'_S\) for arbitrary subsets~\(S\).  (In terms of purities of the
normalized projector \(\r = \Pi_\QQ/K\), this simply amounts to \(K
\tr[\tr_{T}(\varrho)^2] \geq \tr[\tr_{T^c}(\varrho)^2]\), with equality for all
subsets $|T| < d$.)

From the definition in Eq.~\eqref{def:code}, it follows that {\em pure} codes are those with \(A_j = 0\) 
[or correspondingly, \(A'_j = \binom{n}{j}K^2  D^{-j}\)] for all~\(0<j<d\). 
These are codes whose spanning vectors have maximally mixed \( (d-1) \)-body marginals, 
and correspond to \(r\)-uniform subspaces.

\section{New codes from old}\label{sect:new_from_old}
To develop our main results, we need a method with which new codes can
be constructed from old ones. This is done by taking partial traces of
\(\Pi_\QQ\).
\begin{theorem}[Rains~\cite{681316}]\label{thm:reductions_of_codes}
Let \((\!(n,K,d)\!)_D\) be a pure code with \(n, d \geq 2\). Then
there exists a pure code \( (\!(n-1,DK,d-1)\!)_D\).
\end{theorem}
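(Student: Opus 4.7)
The plan is to construct the new code as the image of the partial trace: set $\Pi_{\QQ'} := D\,\tr_n(\Pi_\QQ)$ and show that (i) $\Pi_{\QQ'}$ is an orthogonal projector of rank $DK$, and (ii) its image $\QQ'$ satisfies the Knill--Laflamme conditions of a pure $(\!(n-1,DK,d-1)\!)_D$ code. The whole argument uses only the pure-code property that $\bra{i_\QQ} E \ket{j_\QQ} = (\tr(E)/D^n)\,\delta_{ij}$ for any $E$ of weight at most $d-1$ on the original $n$ parties.

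For step (i), I would first exploit purity at distance $d \geq 2$: every codeword $\ket{i_\QQ}$ has a maximally mixed marginal on party $n$, so its Schmidt decomposition across the cut $\{n\}\,|\,\{1,\dots,n-1\}$ has uniform spectrum $1/\sqrt{D}$ and Schmidt rank $D$. Because the reduced state on party $n$ is degenerate, the Schmidt basis on that factor is completely free: I can fix once and for all one orthonormal basis $\{\ket{a}_n\}_{a=1}^D$ on party $n$ and expand
\[
\ket{i_\QQ} = \frac{1}{\sqrt{D}}\sum_{a=1}^D \ket{\phi_{i,a}}\ot \ket{a}_n \qquad (i=1,\dots,K),
\]
with $\{\ket{\phi_{i,a}}\}_{a}$ orthonormal in $(\C^D)^{\ot(n-1)}$ for each fixed $i$. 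The key step is the cross-$i$ inner product
\[
\braket{\phi_{i,a}}{\phi_{j,b}} = D\,\bra{i_\QQ}\one_{n^c}\ot \ketbra{a}{b}_n\ket{j_\QQ};
\]
since $\one\ot\ketbra{a}{b}$ has weight at most $1<d$, the pure-code condition reduces this to $D\cdot\tr(\one\ot\ketbra{a}{b})/D^n\,\delta_{ij} = \delta_{ab}\delta_{ij}$. So the family $\{\ket{\phi_{i,a}}\}_{i,a}$ consists of $DK$ orthonormal vectors, and substituting the expansion back into $\tr_n\Pi_\QQ$ yields $\tr_n\Pi_\QQ = \tfrac{1}{D}\sum_{i,a}\dyad{\phi_{i,a}}$, proving that $\Pi_{\QQ'}$ is the projector onto the span $\QQ'$ and has rank exactly $DK$.

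For step (ii), I would verify the Knill--Laflamme condition for $\QQ'$ by computing
\[
\bra{\phi_{i,a}} E \ket{\phi_{j,b}} = D\,\bra{i_\QQ} E\ot \ketbra{a}{b}_n \ket{j_\QQ}
\]
for any error $E$ on the first $n-1$ parties with $\wt(E) < d-1$. Decomposing $\ketbra{a}{b} = (\delta_{ab}/D)\one + (\text{traceless})$ in the local operator basis, $E\ot\ketbra{a}{b}$ becomes a linear combination of basis elements of weight at most $\wt(E)+1 \le d-1 < d$, so the original pure-code condition applies term by term. Linearity then gives $\bra{\phi_{i,a}} E \ket{\phi_{j,b}} = (\tr(E)/D^{n-1})\,\delta_{ab}\delta_{ij}$, which is exactly Eq.~\eqref{def:code} for $\QQ'$ with constant $C'(E) = \tr(E)/D^{n-1}$; since this vanishes on every non-identity basis element, $\QQ'$ is moreover pure.

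The only real subtlety I anticipate is the Schmidt step: one must carefully observe that the degeneracy of the single-party reduction gives enough unitary freedom to use a single fixed basis $\{\ket{a}_n\}$ simultaneously for all $K$ codewords, rather than a separate basis per~$i$. Once this is granted, everything else reduces to two applications of the pure-code condition---once to operators of weight~$1$, and once to operators of weight at most $d-1$---and the parameters $(\!(n-1,DK,d-1)\!)_D$ then read off directly.
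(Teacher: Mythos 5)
Your proof is correct, but it follows a genuinely different route from the paper's. The paper works entirely with the unitary weight enumerators: it normalizes $\r=\Pi_\QQ/K$, uses the code conditions $K\,\BB'_S(\r)=\AA'_S(\r)$ for $|S|<d$ to compute the purity $\tr[\tr_V(\r)^2]=1/(KD^{|V|})$, and combines this with the rank bound $\mathrm{rank}\,\tr_V(\r)\le KD^{|V|}$ to conclude that $\tr_V(\r)$ is proportional to a projector of the right rank; the distance and purity of the reduced code are then read off from the same enumerator identities, and this is done in one step for any traced-out set $V$ with $|V|<d$ (your single-party version iterates to the same family). You instead give a direct, elementary Knill--Laflamme verification: you build an explicit orthonormal basis $\{\ket{\phi_{i,a}}\}$ of the image of $\tr_n$ and check $\bra{\phi_{i,a}}E\ket{\phi_{j,b}}=D\,\bra{i_\QQ}E\ot\ketbra{a}{b}_n\ket{j_\QQ}=(\tr E/D^{n-1})\,\delta_{ab}\delta_{ij}$ using only the pure-code condition, extended by linearity to operators supported on fewer than $d$ parties. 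This buys transparency (no enumerator machinery from Section~\ref{sect:weight_enum} is needed, and the constant $C'(E)=\tr(E)/D^{n-1}$ comes out explicitly), at the cost of being tied to tracing one party at a time. One simplification: the Schmidt-decomposition step and the worry about basis freedom are unnecessary. Your cross-$i$ computation with the weight-one operator $\one\ot\ketbra{a}{b}_n$ already gives $\braket{\phi_{i,a}}{\phi_{j,b}}=\delta_{ab}\delta_{ij}$ for \emph{all} pairs, including $i=j$, so defining $\ket{\phi_{i,a}}:=\sqrt{D}\,(\one\ot\bra{a}_n)\ket{i_\QQ}$ in an arbitrary fixed basis $\{\ket{a}_n\}$ and invoking that one identity does all the work; equivalently, the maximally mixed single-party marginal forced by purity and $d\ge2$ already makes the components in any fixed basis orthogonal with equal norms.
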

\begin{proof}
Let the code space be spanned by an orthogonal set of vectors, \(\Pi_\QQ = \sum_{i=1}^K \dyad{i_\QQ}\). 
For simplicity,  we normalize the projector onto the code space to a density matrix, \(\r = \Pi_\QQ / K\), 
such that \(\tr(\varrho) = 1\). 
(This is motivated by the fact that \(\r\) stays normalized after application of the partial trace).
The code being pure, it follows from Eq.~\eqref{def:code} that all marginals 
of the spanning vectors \(\ket{i_\QQ}\) on less than~\(d\) parties must be maximally mixed. 
Accordingly, the above vectors can for any subset of parties \(S\subseteq\{1,\dots,n\}\) 
with \(|S| < d\) be Schmidt-decomposed as
\begin{equation}
 \ket{i_\QQ} = \frac{1}{\sqrt{D^{|S|}}} \sum_{\ell = 1}^{D^{|S|}} \ket{v_i^{(\ell)}}_S \ot \ket{w_i^{(\ell)}}_{S^c} \,.
\end{equation} 
We will now show that after performing a partial trace over parties of some subset \(V\) with \(|V|<d\), 
the operator \( \tr_{V}(\varrho) \) forms again 
(a projector onto) a pure code of distance \(d-|V|\) and dimension \(KD^{|V|}\). 
First, note that the rank of \( \tr_{V}(\varrho) \) can be at most~\(KD^{|V|}\),  
while the complementary operator \(\tr_{V^c}(\varrho)\) is proportional to the identity.
Because the reduction onto \(V\) is maximally mixed, \(\AA'_V(\varrho) = 1 / D^{|V|} \). 
From the condition in Eq.\eqref{eq:code_cond_unitary_enum}, 
the complementary reduction must have \(\BB'_V(\varrho) = \tr[ \tr_V(\varrho)^2] = 1/(KD^{|V|})\). 
As the operator \( \tr_V(\varrho)\) can have a rank of at most \(D^{|V|}K\), 
it must indeed be proportional to a projector onto a subspace of dimension \(D^{|V|}K\).

In similar manner, we can establish that the code \(\tr_V(\Pi_\QQ)\) has a distance of \(d-|V|\). 
For this we must check 
the condition in Eq.~\eqref{eq:code_cond_unitary_enum}, namely
\(K B'_j[\tr_S(\Pi_\QQ)] = A'_j[\tr_S(\Pi_\QQ)]\) for \(0\leq j<d-|V|\).
Respectively, it suffices to show that \(K \BB'_S(\r') = \AA'_S(\r')\) 
for all \(|S| < d-|V|\), with \(\r' = \tr_V(\varrho)\). 
Let~\(T\) be a subset of size smaller than \(d-|V|\) with \(T^c \cap V = \varnothing\). 
Then \(\tr[\tr_{T \backslash V}(\r')^2] = \tr[\tr_{T}(\varrho)^2] = 1/(KD^{|T|})\).
On the other hand, \(\tr[\tr_{T^c}(\r')^2] = 1/D^{|T|}\), where \(T^c\) is now 
the complement of \(T\) in \(\{1,\dots,n\} \backslash V\). We conclude that
\(K \BB'_T(\r') = \AA'_T(\r')=1/D^{|T|}\) for all \(|T| < d-|V|\), and 
\(\tr_V \Pi_\QQ\) indeed spans a pure code of distance \(d-|V|\). 
The claim follows by setting \(|V|\) to a single party.
\end{proof}
As established in the above proof, a pure \(\QECC(n,k,d,D)\) code
spawns a family of new pure codes having parameters
\(\QECC({n-s},{k+s},{d-s},D)\) for all integers \(s\) in \(0\leq s <
d\).  In the case of stabilizer codes, the same result can be obtained
more straightforwardly~\cite{quant-ph/9705052v1,
  alsina2019absolutely}.

\begin{remark}
Naturally, this implies that if the existence of a pure \(\QECC(n,k,d,D)\) can be ruled out, 
then any pure \(\QECC({n+s},{k-s},{d+s},D)\) for \(0\leq s \leq k\) cannot exist either.
\end{remark}

Does this method of creating new codes from old by partial trace also work for codes that are not pure?
It is tempting to think that any given impure code \((\!(n,K,d)\!)_D\) 
may yield a \((\!(n-1,K',d-1)\!)_D\) with \(K < K' \leq DK\).
However, this does not seem to be straightforward: 
consider Shor's code, which is an impure code with parameters $(\!(9,2,3)\!)_2$. 
A partial trace on the last qubit yields a projector of rank $4$,
yet it does not form an $(\!(8,4,2)\!)_2$ code (such that $K'=DK$), 
as an analysis of its weight distribution shows.\footnote{
The Shor code is spanned by the vectors $(\ket{000} + \ket{111})^{\ot 3}$ 
and $(\ket{000} - \ket{111})^{\ot 3}$. 
It has the weights
$A = [4,\allowbreak 0,\allowbreak 36,\allowbreak 0,\allowbreak 108,\allowbreak 0,\allowbreak 300,\allowbreak 0,\allowbreak 576,\allowbreak 0]$ and $B = [2,\allowbreak 0,\allowbreak 18,\allowbreak 78,\allowbreak 54,\allowbreak 414,\allowbreak 150,\allowbreak 666,\allowbreak 288,\allowbreak 378]$
giving it a minimum distance $d=3$.
After a partial trace over a single particle, one obtains
$A' = [ 16,\allowbreak 0,\allowbreak 112,\allowbreak 0,\allowbreak 240,\allowbreak 0,\allowbreak 400,\allowbreak 0,\allowbreak 256]$ and $B' = [4,\allowbreak 8,\allowbreak 80,\allowbreak 152,\allowbreak 520,\allowbreak 568,\allowbreak 1136,\allowbreak 808,\allowbreak 820]$,
a trivial code with distance $d=1$ and $K=4$.}

\section{Quantum MDS codes}\label{sect:QMDS}
Let us recall the bound from which the concept of QMDS codes originates, the quantum Singleton bound.
\begin{theorem}[Rains~\cite{782103}]\label{thm:quantum_singleton_bound}
 Let \(\QQ = [\![n,k,d]\!]_D\) be a QECC. Its parameters are bounded by
 \begin{equation}\label{eq:qmds_bound}
  k +2d \leq n+2 \,.
 \end{equation}
\end{theorem}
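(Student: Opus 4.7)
My plan is to prove the bound via the entropic characterization of quantum error correction set up in Section~\ref{sect:QECC}, which is the shortest route and avoids any weight-enumerator machinery. I will purify the maximally mixed state on the code, $\r = \Pi_\QQ / K$, by coupling it to a reference system $R$ of dimension $K$, obtaining $\ket{\phi_\QQ}$ on $R$ together with the $n$ code subsystems. The distance-$d$ condition then guarantees, for every subset $A$ of at most $d-1$ code particles, that $\r_{RA} = \r_R \ot \r_A$ and hence $S(RA) = S(R) + S(A) = k \log D + S(A)$.

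The core step will be to pick two disjoint subsystems $A_1, A_2 \subseteq \{1,\dots,n\}$ each of size exactly $d-1$, together with the remaining block $C = \{1,\dots,n\}\setminus(A_1 \cup A_2)$ of size $|C| = n - 2(d-1)$; this is only possible when $n \geq 2(d-1)$, which I treat as the main case. Using purity of $\ket{\phi_\QQ}$ on $R A_1 A_2 C$ I obtain the complementary-entropy identities $S(A_2 C) = S(R A_1)$ and $S(A_1 C) = S(R A_2)$; substituting the factorization identity on each and summing then yields
\begin{equation*}
 S(A_1 C) + S(A_2 C) = 2 k \log D + S(A_1) + S(A_2) .
\end{equation*}
Finally, applying ordinary subadditivity $S(A_i C) \leq S(A_i) + S(C)$ to the left-hand side cancels $S(A_1) + S(A_2)$ and leaves $k \log D \leq S(C) \leq |C| \log D = (n - 2d + 2) \log D$, which is exactly $k + 2d \leq n + 2$.

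I do not foresee any real obstacle here: the essential ingredients (purification, the uncorrelation identity for small subsystems, and subadditivity) are all either explicitly stated in the excerpt or fully standard. The only minor point that still needs disposing of is the edge regime $n < 2(d-1)$, where two disjoint $(d-1)$-subsets do not exist. I would handle this by applying the factorization identity to a single subset $A$ of size $d-1$ together with its complement $B$ of size $n-(d-1)<d-1$, and combining them with the purity relations $S(RA) = S(B)$ and $S(RB) = S(A)$ to force $S(R) = 0$, hence $K = 1$; in turn, $K=1$ with the stipulated distance would require a $(d-1)$-uniform state, which is itself impossible in this regime by a one-line Schmidt-rank argument against having flat marginals on more than half of the particles.
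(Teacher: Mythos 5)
Your proposal is essentially the paper's second proof of the Singleton bound (Cerf--Cleve, Appendix~A): the same purification with a reference $R$, the same three-block split into two disjoint $(d-1)$-sets plus a remainder $C$ of size $n-2(d-1)$, and the same combination of the uncorrelation identity $S(RA)=S(R)+S(A)$ with purity and subadditivity to get $S(R)\leq S(C)$. The only deviation is the degenerate regime $n<2(d-1)$, which the paper handles via no-cloning (restricting its entropic proof to $K>1$) and via the purity convention/Schmidt-rank step of its first proof, whereas you derive $S(R)=0$ entropically and then invoke the same Schmidt-rank observation for $K=1$ --- a valid, equally short alternative.
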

For a code \(\QQ = (\!(n,K,d)\!)_D\) with \(K\) not necessarily a
power of \(D\), the quantum singleton bound reads
 \begin{equation}\label{eq:qmds_bound_general}
  K \leq D^{n-2(d-1)}\,.
 \end{equation}
Two proofs of the quantum Singleton bound are presented in Appendix~\ref{app:quantum_singleton_bound}.

A code that achieves equality in Eqs.~\eqref{eq:qmds_bound} and~\eqref{eq:qmds_bound_general}, respectively,
[i.e., having parameters \((\!(n,D^{n-2d+2},d)\!)_D\)] is called a {\em quantum maximum distance separable} code (QMDS). 
The length \(n\) of QMDS codes is unbounded
for \(d\leq 2\); these codes are called {\em trivial}~\cite{Brun_Lidar_QECC}. 
From now on, we restrict ourselves to non-trivial QMDS codes, 
and can make use of \(n+2 = k+2d\) in all derivations that follow. 

It happens that all QMDS codes are pure~\cite{782103, Brun_Lidar_QECC}.
For this fact we will present a new information theoretic proof which was
kindly communicated to us by Andreas Winter~\cite{Winter_private-comm_QMDS-purity_2019}.
The following lemma on the von Neumann entropy $S(J) = S(\varrho_J) = - \sum_i \lambda_i \log(\lambda_i)$ of a 
subsystem $J \subseteq \{1,\dots,n\}$, where $\lambda_i$ are the eigenvalues of $\varrho_J$, is needed.

\begin{lemma}[Winter~\cite{Winter_private-comm_QMDS-purity_2019}]\label{lemma:entropy}
Let $n \geq m > \ell$. Then
\begin{equation}
 \frac{1}{\binom{n}{m}}\sum_{\substack{I \subseteq \{1,\dots,n\}\\ |I|=m}} S(I) \,\leq \,
 \frac{m}{\ell} \frac{1}{\binom{n}{\ell}}\sum_{\substack{J \subset \{1,\dots,n\}\\ |J|= \ell}} S(J)\,.
\end{equation} 
\end{lemma}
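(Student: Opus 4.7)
The plan is to reduce the statement to Han's inequality for von Neumann entropy---itself a consequence of strong subadditivity---via a double-counting argument. For each $m$-subset $I\subseteq\{1,\ldots,n\}$ I would first establish the local bound
\[
\frac{S(I)}{m}\;\leq\;\frac{1}{\ell\binom{m}{\ell}}\sum_{\substack{J\subseteq I\\ |J|=\ell}}S(J),
\]
which is Han's inequality applied to the reduced state on~$I$. Summing this over all $m$-subsets $I$ and interchanging the order of summation, each $\ell$-subset $J\subseteq\{1,\ldots,n\}$ is counted exactly $\binom{n-\ell}{m-\ell}$ times, and the identity $\binom{n-\ell}{m-\ell}/\binom{m}{\ell}=\binom{n}{m}/\binom{n}{\ell}$ rearranges the result directly into the claim of the lemma.

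For the local bound I would show the monotonicity $f(k+1)/(k+1)\leq f(k)/k$ of the averaged entropy
\[
f(k)\;:=\;\frac{1}{\binom{m}{k}}\sum_{\substack{I\subseteq[m]\\ |I|=k}}S(\varrho_I),
\]
and iterate from $k=\ell$ up to $k=m-1$; since $f(m)=S(\varrho_{[m]})$ this yields exactly the local bound. The monotonicity itself is another round of double-counting: applying the elementary Han inequality $k\,S(\varrho_J)\leq\sum_{i\in J}S(\varrho_{J\setminus\{i\}})$ to each $|J|=k+1$ and summing, each $k$-subset on the right-hand side appears in exactly $m-k$ of the $(k+1)$-supersets, which after normalization is precisely the required step.

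The main obstacle is the elementary quantum Han inequality itself. Here the classical chain-rule argument transfers verbatim: writing $S(\varrho_J)-S(\varrho_{J\setminus\{i\}})=S(X_i\mid X_{J\setminus\{i\}})$ in terms of the quantum conditional entropy and using strong subadditivity in the form $S(X_i\mid X_{J\setminus\{i\}})\leq S(X_i\mid X_{<i})$ for any fixed ordering of~$J$, summation over $i\in J$ gives $\sum_{i\in J}S(X_i\mid X_{J\setminus\{i\}})\leq\sum_{i\in J}S(X_i\mid X_{<i})=S(\varrho_J)$, where the last equality is a second application of the chain rule. Rearrangement then produces Han's inequality, so the whole argument rests transparently on SSA.
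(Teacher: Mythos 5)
Your proof is correct and follows essentially the same strategy as the paper: a single-deletion entropy inequality (the quantum Han step, the paper's Eq.~\eqref{eq:lemma_eq_1}) derived from strong subadditivity, then iterated and averaged with a double-counting argument over subsets. The only deviation is in the micro-step: the paper purifies with a reference system $R$ and uses $S(X_{\{1,\dots,n\}}|R)=-S(R)$ together with subadditivity of the conditional entropy, whereas you run the classical chain-rule argument (conditioning reduces entropy) directly; both reduce to SSA, and your explicit combinatorial bookkeeping simply spells out the paper's terse ``repeated application'' step.
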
 
The proof can be found in Appendix~\ref{app:entropy_lemma}.

\begin{theorem}[Rains~\cite{782103}]\label{prop:QMDS_pure}
Let \(\QQ\) be a QMDS code. Then \(\QQ\) is pure.
\end{theorem}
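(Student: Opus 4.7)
The plan is to combine the entropic formulation of the Knill--Laflamme condition recalled in Section~\ref{sect:QECC} with Winter's averaged entropy inequality (Lemma~\ref{lemma:entropy}) to show that every $(d-1)$-body marginal of $\varrho = \Pi_\QQ/K$ is maximally mixed. This is equivalent to the vanishing of the Shor--Laflamme weights $A_j(\Pi_\QQ)$ for $0<j<d$, and hence to the code being pure.

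For the setup, I would purify $\varrho$ with a reference system $R$ of dimension $K$ to obtain $\ket{\phi_\QQ}$, and work with von Neumann entropies $S(J)=S(\varrho_J)$ for subsets $J\subseteq\{1,\dots,n\}$. The entropic distance condition $S_{RA}=S_R+S_A$ (valid for $|A|<d$), together with $S_R=\log K$ and the global purity of $\ket{\phi_\QQ}$ (which forces $S_{RA}=S_{A^c}$), yields the key identity
\begin{equation*}
S(I)=\log K+S(I^c)\quad\text{whenever } |I^c|<d.
\end{equation*}

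For the main step, I would assume $k\geq 1$, so that $n-2d+2>0$; the edge case $k=0$ (AME states) is a one-dimensional subspace for which purity reduces to the very definition of the distance, and can be disposed of separately since Winter's inequality requires $m>\ell$ strictly. Applying Lemma~\ref{lemma:entropy} to $\varrho$ with $\ell=d-1$ and $m=n-d+1$, and noting $\binom{n}{m}=\binom{n}{d-1}$ so that the averages on the two sides are indexed by complementary subsets of the same cardinality, I would substitute the above identity into the left-hand side to obtain
\begin{equation*}
\log K+\bar S\;\leq\;\frac{n-d+1}{d-1}\,\bar S,
\end{equation*}
where $\bar S$ denotes the average of $S(J)$ over $|J|=d-1$. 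Rearranging and invoking the QMDS relation $\log K=(n-2d+2)\log D$ gives $\bar S\geq (d-1)\log D$. Combined with the trivial dimensional upper bound $S(J)\leq (d-1)\log D$, this forces $S(J)=(d-1)\log D$ for \emph{every} $|J|=d-1$, so $\varrho_J=\one/D^{d-1}$. By monotonicity of the partial trace, all smaller marginals are also maximally mixed, which is exactly $A_j(\Pi_\QQ)=0$ for $0<j<d$.

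The main obstacle is orienting the Winter inequality correctly: the upper bound $\bar S\leq (d-1)\log D$ is free from dimensionality, and the whole point is to extract a matching lower bound. That is achieved by feeding the distance identity into the left-hand side and then dividing by $n-2d+2$, which is exactly where the QMDS saturation $\log K=(n-2d+2)\log D$ enters as a sharp constraint. This is also the step that excludes the degenerate $k=0$ case, for which a short direct argument suffices.
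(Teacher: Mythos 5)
Your proposal is correct and follows essentially the same route as the paper's proof: purify with a reference system, use the entropic distance condition $S_{RA}=S_R+S_A$, apply Lemma~\ref{lemma:entropy} with $\ell=d-1$, $m=n-d+1$, and let the QMDS saturation $\log K=(n-2d+2)\log D$ force $S(A)=(d-1)\log D$ for every $(d-1)$-subset. Your explicit treatment of the $k=0$ edge case (where division by $n-2d+2$ and the strict inequality $m>\ell$ fail, and purity holds by the convention on $K=1$ codes) is a welcome detail that the paper's proof leaves implicit.
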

\begin{proof} (Winter~\cite{Winter_private-comm_QMDS-purity_2019})
Purify the projector~$\Pi_\QQ$ onto the code space with a reference system~$R$ of dimension~$D^k$.
For any bipartition $A|B$ of $\{1,\dots,n\}$ with sizes $|A| = d-1$ and $|B| = n-d+1$, respectively,
\begin{equation}
 S(B) = S(RA) = S(R) + S(A)
\end{equation} 
must hold for $\Pi_\QQ$ to be a code of distance~$d$
[cf. Section~\ref{sect:QECC}].
Naturally, also 
\begin{equation}
 S(\overbar{B}) = S(R) + S(\overbar{A})\,,
\end{equation}
where $S(\overbar{A})$ and $S(\overbar{B})$ denote the average entropy of subsystems 
in $\{1,\dots, n\}$ of sizes $d-1$ and $n-d+1$, respectively.
Making use of Lemma~\ref{lemma:entropy}, one has that 
\begin{equation}\label{eq:QMDS_purity_inequality}
S(R) = S(\overbar{B}) - S(\overbar{A})
     \leq \frac{n-2(d-1)}{d-1}S(\overbar{A})\,.
\end{equation} 
For a quantum MDS code, $S(R) = k = n-2(d-1)$.
Thus to satisfy Eq.~\eqref{eq:QMDS_purity_inequality}, 
$S(A) = (d-1)\log(D)$ for all $A$ of size $(d-1)$ must hold. This proofs the claim.
\end{proof}

It is interesting to note that Eq.~\eqref{eq:QMDS_purity_inequality} presents a trade-off,
where large values of $d/n$ and $k/n$ go hand in hand with a highly entangled code space.

Quantum maximum distance separable codes being pure, we can extend Observation~\ref{prop:eq_QMDS_rUS} 
to the case of subspaces that meet the quantum Singleton bound:
\begin{observation}[QMDS codes and maximal rUS]
The following objects are equivalent:\\
 1.  an \( \QECC(n,{n-2d+2},d,D)\) QMDS code;\\
 2.  a \((d-1)\)-uniform subspace in \((\C^D)^{\ot n}\) of dimension \(n-2d+2\).
\end{observation}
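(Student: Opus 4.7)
The plan is to derive this equivalence as a direct specialization of Observation~\ref{prop:eq_QMDS_rUS}, using Theorem~\ref{prop:QMDS_pure} to supply the purity hypothesis that the earlier observation requires. All nontrivial content has already been front-loaded; the final observation is essentially a restatement of the pure-QECC/rUS correspondence at the Singleton-saturating parameters, with the dimension being \(K = D^{n-2d+2}\).

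For the direction (1) \(\Rightarrow\) (2), I would start from a QMDS code with parameters \((\!(n, D^{n-2d+2}, d)\!)_D\). Theorem~\ref{prop:QMDS_pure} guarantees that every QMDS code is pure, so Observation~\ref{prop:eq_QMDS_rUS} applies with \(K = D^{n-2d+2}\) and immediately furnishes an equivalent \((d-1)\)-uniform subspace of \((\C^D)^{\ot n}\) of the same dimension.

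For the converse (2) \(\Rightarrow\) (1), I would take a \((d-1)\)-uniform subspace of dimension \(D^{n-2d+2}\). Observation~\ref{prop:eq_QMDS_rUS} in the reverse direction yields a pure \((\!(n, D^{n-2d+2}, d)\!)_D\) quantum error correcting code, whose dimension saturates the quantum Singleton bound in Eq.~\eqref{eq:qmds_bound_general}. By definition this code is QMDS. Note that purity is not an additional hypothesis here, since it is produced by Observation~\ref{prop:eq_QMDS_rUS} from the uniformity of the subspace.

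There is no genuine obstacle: the only conceptual ingredient beyond Observation~\ref{prop:eq_QMDS_rUS} is automatic purity of QMDS codes, which is precisely the content of Theorem~\ref{prop:QMDS_pure}. What the observation really records is that, among all parameter triples satisfying the Singleton bound, the maximal case is self-dual in the sense that the pure-code condition is automatic on both sides of the correspondence.
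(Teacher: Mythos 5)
Your proposal is correct and matches the paper's own reasoning: the paper derives this observation in exactly the same way, by invoking Theorem~\ref{prop:QMDS_pure} (all QMDS codes are pure) and then specializing Observation~\ref{prop:eq_QMDS_rUS} to the Singleton-saturating dimension \(K = D^{n-2d+2}\), with the converse direction following because the pure code produced from the uniform subspace automatically meets the Singleton bound.
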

These objects---quantum MDS codes and $r$-uniform subspaces of maximal dimension---are now the main focus of our attention. 
All results in the following sections apply to both objects.

\section{QMDS families}\label{sect:QMDS_families}
By Theorem~\ref{thm:reductions_of_codes}, the existence of a QMDS code
with distance~\(d\) leads to a family of QMDS codes with distances
\(d' \leq d\) (see Fig.~\ref{fig:qmds_family}). As an example, the
existence of a code having the parameters \(\QECC(6,0,4,2)\) yields
the chain \(\QECC(6,0,4,2) \Rightarrow \QECC(5,1,3,2) \Rightarrow
\QECC(4,2,3,2) \Rightarrow \QECC(3,3,1,2)\), where we refer to
\(\QECC(6,0,4,2)\) as the parent code.  Such a {\em QMDS family} is
solely determined by the parameter \(n+k\) with $k=\log_D K$ ($n+k=6$
in the above example), and we are interested in the highest achievable
distance \(\tilde d = (\tilde n- \tilde k)/2+1\) within any given
family.

\begin{figure}[tbp]
\large
  \begin{tabular}{c@{\kern1cm}c}
    $n+k=6$, $D=2$ &  $n+k=12$, $D=3$\\[1ex]
    $\begin{array}[t]{rcl}
      \begin{picture}(0,0)
        \put(0,3){\vector(0,-1){50}}
      \end{picture}
       & \QECC(6,0,4,2) & \exists\\
       & \QECC(5,1,3,2) & \exists\\
       & \QECC(4,2,2,2) & \exists\\
       & \QECC(3,3,1,2) & \exists
    \end{array}$ &
    $\begin{array}[t]{rcl}
      \begin{picture}(0,0)
        \put(0,3){\vector(0,-1){100}}
      \end{picture}
      & \QECC(12,0,7,3) & \not\!\exists\\
      & \QECC(11,1,6,3) & \not\!\exists\\
      & \QECC(10,2,5,3) & \not\!\exists\\
      & \QECC( 9,3,4,3) & \not\!\exists\\
      & \QECC( 8,4,3,3) & \exists\\
      & \QECC( 7,5,2,3) & \exists\\
      & \QECC( 6,6,1,3) & \exists
    \end{array}$
  \end{tabular}
 \caption{Two examples of QMDS families. Left: a qubit QMDS family
   with $n+k=6$. All its members, up to the parent code
   \(\QECC(6,0,4,2)\), exist. Right: a qutrit QMDS family with
   $n+k=12$. Its members are known for \(d\leq
   3\)~\cite{quant-ph/0312164v1}, while it follows from the shadow
   inequalities that no corresponding codes can exist for \(d>3\).
   See also Tables~\ref{tab:constructions} and
   \ref{tab:bounds_D=3} in
   Appendix~\ref{app:constructions_tables}.}
 \label{fig:qmds_family}
\end{figure}

Note that the reversal of such a chain of codes might not always be
possible: for example, the existence of a code \(\QECC(8,4,3,3)\)
does not imply the existence of a code \(\QECC(9,3,4,3)\).  Indeed, a
construction for the former is known, whereas the existence of the
latter can be excluded (see Section~\ref{sect:shadow_bounds} and
Tables~\ref{tab:constructions} and \ref{tab:bounds_D=3} in
Appendix~\ref{app:constructions_tables}).  Nevertheless, any QMDS
code~\(\QQ\) has the characteristics of same-sized reductions of any
of its hypothetical parent codes~\(\tilde \QQ\): the reductions
of~\(\QQ\) are proportional to projectors, whose ranks match those of
hypothetical reductions of~\(\tilde \QQ\), while forming QECC
themselves.  This structure of nested projectors makes QMDS codes 
both attractive from the perspective of coding and entanglement theory, 
but also non-trivial to construct.

A certain part of the chain of the QMDS codes, consisting of the two top-most codes in any family, 
can always be reversed.
\begin{proposition}\label{prop:purification}
 The existence of the following two QMDS codes is equivalent:
\begin{equation}
 (\!(n,1,n/2+1)\!)_D \q\Longleftrightarrow\q (\!(n-1,D,n/2)\!)_D \,.
\end{equation}
(Note that for these to be QMDS codes, \(n\) must necessarily be even).
\end{proposition}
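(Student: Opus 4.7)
For the forward direction, my plan is immediate from machinery already in place: a $(\!(n,1,n/2+1)\!)_D$ QMDS code is pure by Theorem~\ref{prop:QMDS_pure}, so applying Theorem~\ref{thm:reductions_of_codes} with $s=1$ produces a pure $(\!(n-1,D,n/2)\!)_D$ code, which still saturates the quantum Singleton bound and is therefore QMDS.

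For the converse I would purify. Take an orthonormal basis $\{\ket{i_\QQ}\}_{i=0}^{D-1}$ of an $(\!(n-1,D,n/2)\!)_D$ code (pure by Theorem~\ref{prop:QMDS_pure}), adjoin a fresh $D$-level ancilla as new party~$1$, and set
\[
\ket{\psi} = \frac{1}{\sqrt D}\sum_{i=0}^{D-1}\ket{i}\ot\ket{i_\QQ}\,.
\]
By Observation~\ref{prop:eq_QMDS_rUS} it suffices to show that $\ket{\psi}$ is $(n/2)$-uniform, since the one-dimensional subspace $\C\ket{\psi}\subset(\C^D)^{\ot n}$ is then a pure $(\!(n,1,n/2+1)\!)_D$ QMDS code.

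To verify the uniformity, I would fix any $T\subseteq\{1,\dots,n\}$ with $|T|=n/2$ (smaller marginals follow by further partial traces from these) and split into two cases according to whether the ancilla lies inside $T$. If $1\in T$, write $T=\{1\}\cup T'$ with $|T'|=n/2-1$ and expand $\tr_{T^c}(\ket{i_\QQ}\bra{j_\QQ})$ in an orthonormal operator basis on $T'$; each basis element, extended by identity on $T^c$, has weight at most $n/2-1<d$, so the Knill-Laflamme condition applies, and purity of the inner code ($C(E)=0$ for $0<\wt(E)<d$) kills every non-identity term. The surviving identity contribution is $\delta_{ij}\one_{T'}/D^{n/2-1}$, and summing against the ancilla overlap $\sum_i\ket{i}\bra{i}$ collapses $\r_T$ to $\one_T/D^{n/2}$. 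If instead $1\notin T$, purity of $\ket{\psi}$ forces $\r_T$ and $\r_{T^c}$ to share a spectrum, and the previous case already shows that $\r_{T^c}$ is maximally mixed.

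No step here is technically deep; the only real book-keeping is tracking whether the ancilla sits in $T$ or in $T^c$, so that the $\delta_{ij}$ from the ancilla overlap combines correctly with the $\delta_{ij}$ from Knill-Laflamme. Conceptually the proposition simply identifies the two extremal members ($k=0$ versus $k=1$) of the QMDS family with fixed $n+k$, which is exactly what Observation~\ref{prop:eq_QMDS_rUS} leads one to expect.
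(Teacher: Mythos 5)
Your argument is correct and follows essentially the same route as the paper: forward direction via Theorem~\ref{thm:reductions_of_codes}, converse by purifying with a single ancilla and verifying that every $n/2$-body marginal of the purification is maximally mixed, with the pure-state (Schmidt) spectrum argument transferring the property between complementary marginals. The only difference is which half you compute directly: you apply the Knill--Laflamme purity condition [Eq.~\eqref{def:code}] to the marginals containing the ancilla, whereas the paper handles the ancilla-free marginals via a further application of Theorem~\ref{thm:reductions_of_codes} and the enumerator identities [Eq.~\eqref{eq:code_cond_unitary_enum}]; both verifications are sound.
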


\begin{proof}
\noindent $\Rightarrow$: This direction was established in Theorem~\ref{thm:reductions_of_codes}.

\noindent $\Leftarrow$: 
  Let us purify \((\!(n-1,D,n/2)\!)_D\) with the associated projector
  \(\Pi_\QQ = \sum_{i=1}^D \dyad{i_\QQ}\) to a state on \(n\) parties, 
\begin{equation}
 \ket{\phi} = \frac{1}{\sqrt{D}} \sum_{i=1}^D \ket{i_\QQ} \ot \ket{i_R}\,,
\end{equation} 
where \(\{\ket{i_R}\}\) is a basis for the \(n\)th particle.  From
the conditions in Eq.~\eqref{eq:code_cond_unitary_enum} it follows
that for \(\dyad{\phi}\) to be a pure code of distance \(n/2+1\) it
suffices to check that \(B'_j(\dyad{\phi}) = A'_j(\dyad{\phi}) =
\binom{n}{j}D^{-j}\) (as \(K=1\)) for all \(j<n/2+1\).  By partially
tracing over any \((n/2-1)\) parties of \(\Pi_\QQ\), we see that this
is indeed the case.  With Theorem~\ref{thm:reductions_of_codes}, any
code \(\QQ\) with parameters \(\QECC(n-1,1,n/2,D)\) can be reduced to
a pure \(\QECC(n/2, n/2, 1,D)\), the latter corresponding to the
identity matrix on \(n/2\) particles.  Thus every reduction of
\(\Pi_\QQ = \sum_{i=1}^D \dyad{v_i}_\QQ\) onto~\(n/2\) particles is
maximally mixed.  Correspondingly, any reduction of \(\ket{\phi}\) of
size \(n/2\) that does not include the last particle is maximally
mixed.  From the Schmidt decomposition for pure states, it follows
that any \(n/2\)-sized reduction that includes the last particle must
then be maximally mixed, too.  Thus~\(\dyad{\phi}\) forms a pure code
of dimension \(1\) and distance \(n/2+1\).  This completes the proof.
\end{proof}

Hence, not only can an \((\!(n-1,D,n/2)\!)_D\) code be obtained by
partial trace from an \((\!(n,1,n/2+1)\!)_D\), but the latter can
always be constructed by purification from the former.  As a
consequence, all QMDS codes with \(k=0,1\) come in pairs, such as,
e.g., the codes \((\!(6,1,4)\!)_D\) and \((\!(5,D,3)\!)_D\), which
exist for all local dimensions~\(D\)~\cite{782103}.

While it was previously known that every pure code of dimension \(D\)
can with the addition of a single \(D\)-dimensional system be purified
to a rank-one quantum state~\cite{681316}, the increase in distance for
the case of QMDS codes [Proposition~\ref{prop:purification}] appears
to be new.

It is natural to ask under what conditions 
other steps in the hierarchy can be reversed.
While we leave this question open for now, note that for some cases
\begin{equation}
    (\!(n-2,D^2,n/2-1)\!)_D \quad \centernot\Rightarrow \quad  (\!(n-1,D,n/2)\!)_D  \,.
\end{equation}
For example, there exists a $\QECC(6,2,3,3)$ code, 
yet a $\QECC(7,1,4,3)$ can be excluded with the methods of Section~\ref{sect:shadow_bounds} 
(also see Table~\ref{tab:bounds_D=3} in Appendix~\ref{app:constructions_tables}).

Let us make a small detour to propagation rules for classical codes.
From a linear code $[n,k,d]_q$, one can obtain a code $[n-1,k,d-1]_q$ 
by an operation called {\em puncturing} (deleting one coordinate), 
and a code $[n-1,k-1,d]_q$ by {\em shortening} 
(taking an appropriate subcode after deleting one coordinate)~\cite{HuffmanPless2003}. 
Both operations yield MDS codes when starting with an MDS code. 
On the other hand, puncturing (i.e. projectively measuring a single subsystem) 
a quantum code $(\!(n,K,d)\!)_D$ yields 
a code $(\!(n-1,K,d-1)\!)_D$~\cite{681315}. 
But even when starting from a QMDS code, 
the resulting code is, in general, no longer QMDS.  
The analogue of shortening of quantum codes preserves the property of
being a QMDS code, but it is more involved and not always
possible. Rains~\cite{782103} has given a criterion when a stabilizer
code $[\![n-s,k-s,d]\!]_q$ can be derived from a stabilizer code
$[\![n,k,d]\!]_q$ by shortening.

\section{The weights of quantum MDS codes}\label{sect:QMDS_weights}
The weights of classical MDS codes are fixed by their parameters~\cite{MacWilliams1981}, 
and it is natural to ask if a similar result might also hold for their quantum analogue. 
This is indeed the case.

\begin{theorem}\label{thm:QMDS_weights}
 The {\em unitary} weights \(A'_j\) of a general QMDS code \(\QQ = \QECC(n,k,d,D)\) are given by
 \begin{equation}
  A'_j(\Pi_\QQ) =  \binom{n}{j} D^{2k-\min(2\a-j, j)} \,,
 \end{equation} 
 where \(\a = (n+k)/2\).
\end{theorem}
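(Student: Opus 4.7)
The plan is to compute the fine-grained unitary weight \(\AA'_S(\Pi_\QQ)=\tr[(\tr_{S^c}\Pi_\QQ)^2]\) explicitly for each subset \(S\subseteq\{1,\dots,n\}\), and then sum over size-\(j\) subsets to recover \(A'_j\). The key observation driving this approach is that the deepest admissible reduction of a pure QMDS code exhausts the Hilbert space: applying Theorem~\ref{thm:reductions_of_codes} iteratively \(d-1\) times, partially tracing a pure \(\QECC(n,k,d,D)\) over any \(d-1\) parties yields a pure \(\QECC({n-d+1},{k+d-1},1,D)\). The QMDS relation \(k+2d=n+2\) forces \(k+d-1=n-d+1=\alpha\), so the reduced projector has rank \(D^\alpha\)---exactly the full dimension of the remaining \(\alpha\)~qudits---and therefore equals the identity. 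Combined with the normalization implicit in the proof of Theorem~\ref{thm:reductions_of_codes}, this gives \(\tr_V\Pi_\QQ=D^{-(d-1)}\,\one_{V^c}\) for every \(|V|=d-1\).

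From here I would split into two regimes depending on whether \(|S|\) or \(|S^c|\) is small. For \(|S|\leq\alpha\) the complement satisfies \(|S^c|\geq d-1\), so I pick any \(V\subseteq S^c\) with \(|V|=d-1\) and write the trace in two steps,
\[
 \tr_{S^c}\Pi_\QQ=\tr_{S^c\setminus V}\bigl(\tr_V\Pi_\QQ\bigr)=D^{-(d-1)}\tr_{S^c\setminus V}(\one_{V^c})=D^{k-|S|}\,\one_S,
\]
using \(V^c\setminus(S^c\setminus V)=S\) and \(n-2(d-1)=k\). Squaring and taking the trace yields \(\AA'_S=D^{2k-|S|}\). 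For \(|S|\geq\alpha\) we instead have \(|S^c|\leq d-1\), and Theorem~\ref{thm:reductions_of_codes} applied directly to the subsystem \(S^c\) gives \(\tr_{S^c}\Pi_\QQ=D^{-(n-|S|)}\,\Pi_{\QQ_{S^c}}\), with \(\Pi_{\QQ_{S^c}}\) a pure QMDS projector of rank \(D^{k+n-|S|}\); squaring preserves proportionality to the same projector, so \(\AA'_S=D^{-2(n-|S|)}\cdot D^{k+n-|S|}=D^{k+|S|-n}\).

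Summing the \(\binom{n}{j}\) contributions from size-\(j\) subsets produces \(A'_j=\binom{n}{j}D^{2k-j}\) for \(j\leq\alpha\) and \(A'_j=\binom{n}{j}D^{2k-(2\alpha-j)}\) for \(j\geq\alpha\); these are the two branches of \(D^{2k-\min(j,\,2\alpha-j)}\), and they agree at the crossover \(j=\alpha\), yielding the claimed formula. The main subtlety is recognizing the reduction-to-identity fact: without it, one is stuck in the ``middle range'' \(d\leq j<\alpha\), where neither the Knill--Laflamme condition on small subsets nor its dual on small complements directly determines the purity of \(\tr_{S^c}\Pi_\QQ\).
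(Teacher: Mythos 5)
Your proposal is correct and follows essentially the same route as the paper's proof: repeated use of Theorem~\ref{thm:reductions_of_codes} to show that reductions onto at most \(\a\) parties are proportional to the identity while larger reductions are proportional to rank-\(D^{2\a-j}\) projectors, then summing the purities with the normalization \(\tr(\Pi_\QQ)=D^k\). The only difference is presentational---you make the proportionality constants and the crossover at \(j=\a\) explicit, which the paper leaves implicit.
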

\begin{proof}
 From the repeated application of Theorem~\ref{thm:reductions_of_codes},
 all reductions of size smaller than or equal to \(\a\) are proportional to identity.
 On the other hand, all reductions of size \(j > \a\), 
 being pure codes with parameters \(\QECC(j, 2\a-j, j-\a+1,D)\), 
 are also proportional to projectors.
 These, however, have a non-full rank of \(2\a-j\), 
 namely the dimension of their code space. 
 Summing over all reductions of size \(j\) and taking into account 
 the normalization \(\tr(\Pi_\QQ) = D^k\) yields the claim.
\end{proof}

To obtain the Shor-Laflamme weights \(A_j\), we make use of the combinatorial version 
of the Möbius inversion formula (see p.~$267$ in Ref.~\cite{Stanley:2011:ECV:2124415}). 
Denote by \(2^{[n]}\) the set of subsets of \(\{1,\dots, n\}\). Given the functions 
$f\colon 2^{[n]}\rightarrow\R$ and $g\colon 2^{[n]} \rightarrow \R$ with
\( g(S) = \sum_{T\subseteq S} f(T)\,,\)
then 
\begin{equation}\label{prop:Mobius}
 f(S) = \sum_{T\subseteq S} (-1)^{|S - T|} g(T)\,.
\end{equation}
Using Möbius inversion, one can determine the weight distribution of QMDS codes.
\begin{theorem}
 The {\em Shor-Laflamme} weights \(A_j\) of a general QMDS code \(\QQ = \QECC(n,k,d,D)\) are given by
 \begin{equation}
  A_j(\Pi_\QQ) = \binom{n}{j} \sum_{i=0}^j  \binom{j}{i}  (-1)^{j-i} D^{2k+i-\min(2\a-i, i)}\,,
 \end{equation} 
 where \(\a = (n+k)/2\).
\end{theorem}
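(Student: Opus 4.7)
The plan is to invert the linear relation between the Shor--Laflamme and the unitary weights and then substitute the closed form from Theorem~\ref{thm:QMDS_weights}. Since the author explicitly introduced M\"obius inversion in Eq.~\eqref{prop:Mobius}, the cleanest route is via the fine-grained weights $\AA_S$ and $\AA'_S$ on the subset poset of $\{1,\dots,n\}$.

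First I would expand $\Pi_\QQ$ in a Hermitian orthonormal error basis with $e_0=\one$ and $\tr(e_a)=0$ for $a\neq 0$, and compute $\tr_{S^c}(\Pi_\QQ)$. Only those $E_{\bm a}$ whose support lies inside $S$ survive the partial trace, and a short calculation gives the fine-grained identity
\[
  \AA'_S(\Pi_\QQ) \;=\; D^{-|S|} \sum_{T \subseteq S} \AA_T(\Pi_\QQ).
\]
Setting $g(S)=D^{|S|}\AA'_S$ and $f(T)=\AA_T$ puts us exactly in the setting of Eq.~\eqref{prop:Mobius}, so M\"obius inversion yields
\[
  \AA_S \;=\; \sum_{T \subseteq S} (-1)^{|S|-|T|}\, D^{|T|}\, \AA'_T.
\]

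Next I would sum over all $S$ of size~$j$ and swap the order of summation. Each fixed $T$ of size $i$ is contained in exactly $\binom{n-i}{j-i}$ subsets $S$ of size $j$, hence
\[
  A_j \;=\; \sum_{i=0}^{j} (-1)^{j-i}\, D^i\, \binom{n-i}{j-i}\, A'_i.
\]
Substituting $A'_i=\binom{n}{i}\,D^{2k-\min(2\alpha-i,\,i)}$ from Theorem~\ref{thm:QMDS_weights} and applying the elementary identity $\binom{n-i}{j-i}\binom{n}{i}=\binom{n}{j}\binom{j}{i}$ collapses the two binomials and produces the asserted formula after a single rearrangement.

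The only slightly subtle point is establishing the fine-grained relation between $\AA'_S$ and the $\AA_T$ with $T\subseteq S$; this is where the choice of Hermitian basis with traceless off-identity elements enters, and one must be careful that partial tracing really kills every $E_{\bm a}$ with support outside $S$. Everything after that step is purely combinatorial, and as an alternative one could bypass the subset bookkeeping by directly inverting the coarse-grained relation~\eqref{eq:unitary_decomposition} using the standard binomial inverse $\sum_{k=i}^{j}(-1)^{k-i}\binom{n-k}{j-k}\binom{n-i}{k-i}=\delta_{ij}$, arriving at the same intermediate formula for $A_j$ in terms of the $A'_i$.
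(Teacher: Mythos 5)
Your proposal is correct and follows essentially the same route as the paper: the fine-grained relation $\AA'_S = D^{-|S|}\sum_{T\subseteq S}\AA_T$, Möbius inversion on the subset poset, summation over $|S|=j$, and substitution of the unitary weights from Theorem~\ref{thm:QMDS_weights}. The only (harmless) additions are that you sketch a derivation of the fine-grained relation, which the paper simply cites, and you note the equivalent coarse-grained binomial inversion of Eq.~\eqref{eq:unitary_decomposition}.
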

\begin{proof}
In Eq.~\eqref{eq:fine_grained_KLF1} and \eqref{eq:fine_grained_unitary1}, 
we defined the fine-grained weights
\begin{align}
  \AA_S(\Pi_\QQ)  &= \sum_{\supp(E_{\bm{a}}) = S } \tr[E_{\bm{a}}^\dag \Pi_\QQ] \tr[E_{\bm{a}} \Pi_\QQ]\,, \\
  \AA'_S(\Pi_\QQ) &=  \quad \tr [\tr_{S^c} (\Pi_\QQ ) \tr_{S^c} (\Pi_\QQ)]\,.
\end{align}
As shown in \cite{1751-8121-51-17-175301}, they are related via
\begin{equation}
 \AA_S'(\Pi_\QQ) = D^{-|S|} \sum_{T \subseteq S} \AA_T(\Pi_\QQ)\,.
\end{equation}
We can accordingly make use of the Möbius inversion [Eq.~\eqref{prop:Mobius}] to obtain~\cite{681316},
\begin{equation}
 \AA_S(\Pi_\QQ) = \sum_{T \subseteq S}  (-1)^{|S| - |T|}  D^{|T|} \AA'_T(\Pi_\QQ)\,.
\end{equation} 
With \(A_j(\Pi_\QQ) = \sum_{|S|=j} \AA_S(\Pi_\QQ)\), one obtains the Shor-Laflamme weights
for QMDS codes
\begin{align}
 &A_j(\Pi_\QQ) = \sum_{|S|=j} \sum_{T \subseteq S} (-1)^{|S| - |T|}  D^{|T|} \AA'_T(\Pi_\QQ) \nn\\
		     &= \binom{n}{j} \sum_{i=0}^j  \binom{j}{i}  (-1)^{j-i} D^{2k+i-\min(2\a-i, i)}\,.
\end{align}
This ends the proof.
 \end{proof}

\begin{remark}
The same result could be obtained by the polynomial
transform~\cite{681316, 1751-8121-51-17-175301}
\begin{equation}\label{eq:poly_transform_uni_to_SL}
 A(x,y) = A'(x-y,Dy)
\end{equation} 
where 
\begin{align}
  A(x,y) &= \sum_{j=0}^n A_j x^{n-j} y^j \,, &
  A'(x,y) &= \sum_{j=0}^n A'_j x^{n-j} y^j\,. \nn
  \end{align}
\end{remark}

Let us point out that the weights of an \(\QECC(n,k,d,D)\) QMDS code are proportional to 
those of any \(n\)-sized reduction of a {\em hypothetical} QMDS parent code 
\((\!( n+k, 1,  \frac{n+k}{2} + 1)\!)_D\)~\footnote{This argument could be refined: the weights of any 
\(\QECC(n,k,d,D)\) QMDS code are proportional to a reduction of any of its hypothetical QMDS parents 
\(\QECC(n+k-s, s, \frac{n+k}{2} + 1 -s,D)\) for all \(1\leq s \leq k\).} (cf.~the next section):
this hypothetical parent code is represented by a pure state \(\dyad{\phi}\) that
has all its \((n+k)/2\)-body marginals maximally mixed. 
Accordingly, its \(j\)-sized reductions~\(\varrho_{(j)}\) have 
purity \(\tr[ \varrho_{(j)}^2] = D^{-\min(j,n+k-j)}\). 
Indeed, let \(\Pi_{\QQ'} = D^k \tr_V (\dyad{\phi})\) with \(|V|=k\)  
be proportional to an \(n\)-sized reduction of \(\ket{\phi}\).
Summing over the purities of all its marginals of size~\(j\), 
we obtain the weights of Theorem~\ref{thm:QMDS_weights}. 
We conclude that the unitary weights 
of a QMDS code \(\QECC(n,k,d,D)\) are proportional to those of any 
\(n\)-sized reduction \(\tr_V \dyad{\phi}\) with ($|V|=k$) 
of a hypothetical \((\!(n+k,1, \frac{n+k}{2})\!)_D\) code.

This observation motivates the bound on QMDS codes that follows in
Section~\ref{sect:max_len_QMDS}.  It is of the same type as the one
obtained by Scott in Ref.~\cite{PhysRevA.69.052330} (see
Proposition~\ref{prop:AME_bound} in Appendix~\ref{app:stab_AME}) for
the existence of absolutely maximally entangled states.

\section{The maximal length of QMDS codes}
\label{sect:max_len_QMDS}
In this section we derive a new bound for the existence of QMDS codes.
Our bound generalizes a result by Ketkar et al.~\cite{1715533} 
on QMDS {\em stabilizer} or {\em additive} codes to QMDS codes of any type. 
It can equally be seen as a generalization of a bound by Scott~\cite{PhysRevA.69.052330} 
on the existence of codes with parameters \((\!(n,1,\nhf+1)\!)\) that are known as 
{\em absolutely maximally entangled} states or {\em perfect tensors}~\cite{PhysRevA.86.052335, PhysRevLett.118.200502}. 
Thus our main result extends Props.~\ref{prop:QMDS_stab_bound} and~\ref{prop:AME_bound} 
in Appendix~\ref{app:stab_AME} to {\em all} QMDS codes.

\begin{theorem}[Maximal length of QMDS codes]\label{thm:QMDS_bound}
 Let \(\QQ = \QECC(n,k,d,D)\) be a (stabilizer or non-stabilizer) QMDS code with \(d\geq 3\). Then
 \begin{align}
  n 	 &\leq D^2  + d - 2\,, \quad\quad \text{ or equivalently}  \\
  n+k &\leq 2(D^2-1)\,.
 \end{align}
\end{theorem}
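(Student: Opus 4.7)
The plan is to combine the descendance machinery of Theorem~\ref{thm:reductions_of_codes} with the closed-form Shor-Laflamme weights derived in Section~\ref{sect:QMDS_weights}, and then extract the bound by demanding non-negativity of the top weight $A_n$. First, I would iterate Theorem~\ref{thm:reductions_of_codes} $d-3$ times on the given QMDS code, producing a pure distance-$3$ QMDS code $\QECC(n',k',3,D)$ with $n' = n-(d-3)$ and $k' = k+(d-3) = n'-4$. Since $n+k$ is preserved along the descendance, the desired inequality $n \leq D^2+d-2$ (equivalently $n+k \leq 2(D^2-1)$) is equivalent to $n' \leq D^2+1$ for the descendant, so it suffices to prove the theorem in the special case $d=3$.

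Next, I would read off the top weight $A_n$ of a hypothetical distance-$3$ QMDS code $\QECC(n,n-4,3,D)$ from the Shor-Laflamme formula. Here $\alpha = (n+k)/2 = n-2$, and inspection of $\min(2\alpha-i,i)$ shows that it equals $i$ for all $i \leq n-2$ but equals $2\alpha-i$ for $i \in \{n-1,n\}$. Consequently the exponent $2k+i-\min(2\alpha-i,i)$ is constant equal to $2(n-4)$ across the ``bulk'' range $i \leq n-2$, and jumps to $2(n-3)$ at $i=n-1$ and to $2(n-2)$ at $i=n$.

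The bulk terms therefore pull out a common factor $D^{2n-8}$ multiplying $\sum_{i=0}^{n-2}\binom{n}{i}(-1)^{n-i}$, which by $\sum_{i=0}^n \binom{n}{i}(-1)^{n-i} = 0$ collapses to $n-1$. Combining this with the two boundary contributions yields
\begin{equation}
A_n = D^{2n-8}\bigl[D^4 - nD^2 + (n-1)\bigr] = D^{2n-8}\bigl(D^2-1\bigr)\bigl(D^2-(n-1)\bigr).
\end{equation}
Since $A_n = \sum_{\wt(E_{\bm a})=n} |\tr(E_{\bm a}\Pi_\QQ)|^2 \geq 0$ and $D^2-1 > 0$, non-negativity of $A_n$ forces $n-1 \leq D^2$, i.e., $n \leq D^2+1$. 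Undoing the descendance then gives $n \leq D^2+d-2$ for the original code.

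The main obstacle is the combinatorial bookkeeping in the Möbius-inverted sum: one must cleanly separate the two boundary indices $i=n-1,n$ at which the $\min$ switches branches, show that the remaining $n-1$ bulk indices collapse to a single multiple of $D^{2n-8}$ via the alternating-binomial identity, and then recognize the resulting quadratic in $D^2$ as the clean factorization $(D^2-1)(D^2-(n-1))$. A minor subsidiary check is that every intermediate step of the descendance yields a code to which the weight formula applies; this follows from Theorem~\ref{thm:reductions_of_codes} together with the fact that the Singleton identity $k'+2d'=n'+2$ is preserved at each step, so the descendants remain pure QMDS codes so long as $d \geq 3$.
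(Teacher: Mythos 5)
Your proposal is correct and follows essentially the same route as the paper: descend (via Theorem~\ref{thm:reductions_of_codes}) to the distance-3 member of the QMDS family and exploit non-negativity of its full-support Shor--Laflamme weight, which in your normalization factors as $D^{2n'-8}(D^2-1)\bigl(D^2-(n'-1)\bigr)$. The only cosmetic difference is that you extract this coefficient from the closed-form QMDS weight distribution obtained by M\"obius inversion, whereas the paper computes the very same quantity directly from the purities of the $(\alpha+1)$- and $(\alpha+2)$-party reductions via a Bloch decomposition, arriving at the identical condition $D^4-(\alpha+2)(D^2-1)-1\geq 0$.
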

\begin{proof}
Denote by \(\Pi_\QQ\) the projector onto the code space. For convenience 
we normalize the code to a quantum state \(\r = \Pi_\QQ /D^{k}\), such that  \(\tr[\r] = 1\).
Define \(\a = (n+k)/2\), and denote by~\(\r'\) the reduced density matrix of~\(\r\) 
corresponding to the code \(\QQ' = \QECC(a+1, {\a-1}, 2,D)\) 
on the first \(\a + 1\) parties. 
 Likewise, denote by~\(\r''\) a reduced density matrix of~\(\r\) corresponding 
 to \(\QQ'' = \QECC(\a+2, {\a-2}, 3,D)\) on the first \(\a+2\) parties.
 By Theorem~\ref{thm:reductions_of_codes} both~\(\QQ'\) and~\(\QQ''\) must be pure, 
 being derived from a pure code~\(\QQ\).
 Then \(\tr[\r'^2] = D^{-(\a-1)}\) and \(\tr[\r''^2] = D^{-(\a-2)}\). 
 Since \(A_{j} = 0\) for all \(0<j<\a+1\), we can decompose \(\r'\) and \(\r''\) in the Bloch representation in the following way,
 \begin{align}
  \r'  &= \frac{1}{D^{\a+1}} ( \one + P_{\a+1})\,, \\
  \r'' &= \frac{1}{D^{\a+2}} ( \one + \sum_{i=1}^{\a+2} P_{\a+1}^{(i)} \ot \one_i + P_{\a+2})\,,
 \end{align} 
 where \(P_{\a+1}\), \(P_{\a+1}^{(i)}\), and \(P_{\a+2}\) only contain terms of weight \(\a+1\), \(\a+1\), and \(\a+2\) respectively.
 Note that there are \(\a+2\) different terms \(P_{\a+1}^{(i)}\) with support on different subsystems in \(\r''\).
 Also, our normalization is chosen such that
 \begin{align}
  A_{\a+1}(\r') &= D^{-(\a+1)} \tr[P_{\a+1}^2] \,, \\
  A_{\a+1}(\r'') &= D^{-(\a+2)} \tr[(\sum_{i=1}^{\a+2} P_{\a+1}^{(i)}\ot \one_i)^2] \,, \\
  A_{\a+2}(\r'')& = D^{-(\a+2)}\tr[P_{\a+2}^2] \,, 
 \end{align}
 Making use of \(\tr[\r'^2] = D^{-(\a-1)}\)  leads to
\begin{equation}
 A_{\a+1}(\r') = D^2-1 \,.
\end{equation}
 Similarly, making use of \(\tr[\r''^2] = D^{-(\a-2)}\) yields,
 \begin{equation}
 A_{\a + 2}(\r'') = D^4 - (\a + 2)(D^2-1) -1 \geq 0 \,,
 \end{equation}
 which must be non-negative. Division by \((D^2-1)\) leads to
 \(D^2 - 1 - \a \geq 0\), which can be recast to the bound above.
 This proofs the claim. 
\end{proof}

\section{Shadow bounds}\label{sect:shadow_bounds}
Considering absolutely maximally entangled states, 
stronger bounds on their existence can be made than what is achieved by the bound 
from Scott [see Proposition~\ref{prop:AME_bound} in Appendix~\ref{app:stab_AME}] 
in case their local dimension is small~\cite{1751-8121-51-17-175301, HuberWyderka:ametable}.
In a similar spirit, it is possible to constrain the existence of low-dimensional QMDS codes further.

The {\em shadow inequalities} state that for any positive
semi-definite operators \(M_1,M_2\geq 0\) and any subset \(T\subseteq \{1,\dots,n\}\), 
the following family of inequalities hold~\cite{681316, 817508}.
\begin{equation}\label{eq:shadow_inequality}
 \sum_{S\subseteq\{1,\dots,n\}} (-1)^{|S\cap T|} \tr[ \tr_{S^c}(M_1) \tr_{S^c}(M_2)] \geq 0\,.
\end{equation}
The shadow inequalities can be seen as a family of monogamy of entanglement relations that constrain 
the entanglement appearing in the code subspace~\cite{PhysRevA.98.052317}.\footnote{
The shadow inequalities are also connected to a family of positive maps that generalize the reduction map \(\r \mapsto \one - \r\). 
Consequently, Eq.~\eqref{eq:shadow_inequality} also holds in operator form: for all \(M\geq~0\) and all 
subsets \(T\subseteq \{1,\dots,n\}\), 
\(\sum_{S\subseteq\{1,\dots,n\}} (-1)^{|S\cap T|} M_S \ot \one_{S^c} \geq 0\), 
where \(M_S = \tr_{S^c} M\)~\cite{PhysRevA.98.052317}.}
In order to use the shadow inequalities to determine the existence of codes, one sets \(M_1 = M_2 = \Pi_\QQ\) and checks the non-negativity 
of Eq.~\eqref{eq:shadow_inequality} for all subsets \(T\subseteq \{1,\dots,n\}\).

Let \(\QQ = \QECC(n,k,d,D)\) be a QMDS code. Then 
\(\tr_{S^c}(\Pi_\QQ)^2 = D^{2k+\min(n+k-|S|,|S|)}\), in line with the arguments of the proof of Theorem~\ref{thm:QMDS_weights}.
Thus their structure in terms of their unitary invariants is symmetric under permutation of the subsystems.
We thus do not forgo by considering a symmetrized version of Eq.~\eqref{eq:shadow_inequality} only,
the coefficients of the so-called shadow enumerator~\cite{681316,817508}
\begin{align} \label{eq:Shadow_coeff_in_Krawtchouk}
S_j(\Pi_\QQ) &= 
\sum_{|T^c|=j} \sum_{S\subseteq\{1,\dots,n\}} (-1)^{|S\cap T|} \tr[ \tr_{S^c}(\Pi_\QQ)^2] \nn\\
&= \sum_{\ell=0}^n K_{n-j}(\ell,n) A'_\ell(\Pi_\QQ) \geq 0\,.
\end{align}
Above, \(K_m(\ell,n)\) is the Krawtchouk polynomial defined as
\begin{equation}
 K_m(\ell,n) = \sum_{\b=0}^{m} (-1)^\b \binom{n-\ell}{m-\b} \binom{\ell}{\b}\,.
\end{equation}

\begin{remark}
The same result can be obtained by the polynomial transform~\cite{681316, 1751-8121-51-17-175301}
\begin{equation}\label{eq:poly_transform_uni_to_shadow}
 S(x,y) = A'(x+y,y-x)
\end{equation} 
where 
\begin{align}
  S(x,y) &= \sum_{j=0}^n S_j x^{n-j} y^j \,, &
  A'(x,y) &= \sum_{j=0}^n A'_j x^{n-j} y^j\,. \nn
  \end{align}
\end{remark}

We now can state the following corollary for the special case of QMDS codes.
\begin{corollary}\label{cor:shadow_bound}
Let \(\QECC(n,k,d,D)\) be a QMDS code. The following expression must be non-negative
for all $j$ in \(0\leq j \leq n\),
 \begin{equation}
 S_j = D^{2k} \sum_{\ell=0}^n K_{n-j}(\ell,n) \binom{n}{\ell} D^{-\min(n+k -\ell, \ell)} \geq 0\,. 
 \end{equation} 
\end{corollary}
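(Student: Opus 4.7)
The plan is to combine the two ingredients already in hand: the non-negativity of the shadow enumerator coefficients $S_j$ established via the shadow inequality, and the closed-form expression for the unitary weights $A'_\ell$ of a QMDS code from Theorem~\ref{thm:QMDS_weights}. The corollary should then reduce to a single substitution.

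First I would invoke the shadow inequality \eqref{eq:shadow_inequality} with the choice $M_1=M_2=\Pi_\QQ$. Summing the resulting non-negative expressions over all subsets $T$ of fixed complement size $|T^c|=j$ symmetrizes the inequality and, as recorded in Eq.~\eqref{eq:Shadow_coeff_in_Krawtchouk}, yields
\begin{equation*}
S_j(\Pi_\QQ)=\sum_{\ell=0}^{n} K_{n-j}(\ell,n)\,A'_\ell(\Pi_\QQ)\geq 0,
\end{equation*}
so the non-negativity of $S_j$ is guaranteed \emph{for any} projector, in particular for the QMDS code under consideration.

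Next, I would plug in the explicit unitary weights from Theorem~\ref{thm:QMDS_weights}. Writing $\alpha=(n+k)/2$, that theorem gives
\begin{equation*}
A'_\ell(\Pi_\QQ)=\binom{n}{\ell}D^{\,2k-\min(2\alpha-\ell,\,\ell)}
=\binom{n}{\ell}D^{\,2k-\min(n+k-\ell,\,\ell)}.
\end{equation*}
Substituting into the expression for $S_j$ and pulling the common factor $D^{2k}$ outside the sum yields exactly
\begin{equation*}
S_j=D^{2k}\sum_{\ell=0}^{n}K_{n-j}(\ell,n)\binom{n}{\ell}D^{-\min(n+k-\ell,\,\ell)},
\end{equation*}
and the non-negativity claimed in the corollary is inherited from the shadow inequality.

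There is essentially no obstacle here beyond bookkeeping: the content of the statement is simply the shadow inequality specialised by inserting the QMDS weight formula. The only point worth double-checking is the identification $2\alpha-\ell=n+k-\ell$ in the argument of the minimum, so that the formula in the corollary matches the one coming out of Theorem~\ref{thm:QMDS_weights}; once that is verified the result is immediate.
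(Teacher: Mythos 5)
Your proposal is correct and follows exactly the paper's route: the corollary is obtained by specializing the symmetrized shadow inequality, Eq.~\eqref{eq:Shadow_coeff_in_Krawtchouk} with $M_1=M_2=\Pi_\QQ$, and inserting the QMDS unitary weights $A'_\ell=\binom{n}{\ell}D^{2k-\min(n+k-\ell,\ell)}$ from Theorem~\ref{thm:QMDS_weights} (using $2\alpha=n+k$). Nothing is missing; the non-negativity of each $S_j$ is indeed inherited termwise from Eq.~\eqref{eq:shadow_inequality} after summing over the subsets $T$ with $|T^c|=j$.
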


Generally, the constraints imposed by Eqs.~\eqref{eq:shadow_inequality} 
and~\eqref{eq:Shadow_coeff_in_Krawtchouk} do not seem to give rise to 
simple closed-form expressions on the existence or minimum distance of codes. 
In the case of a binary alphabet however, the constraints yield the following bounds 
(cf. Theorem~$15$ in Ref.~\cite{796376} and Theorem~$13.4.1$ in Ref.~\cite{Nebe2006}):
the minimum distance of pure codes  $(\!(n, 1, d)\!)_2$ is bounded by
\begin{align}
 d &\leq 
 \begin{cases}
  2\lfloor \frac{n}{6} \rfloor + 3 \q\q\text{ if } n = 5 \bmod 6; \\
  2\lfloor \frac{n}{6} \rfloor + 2 \q\q\text{ otherwise}\,,
 \end{cases}
\intertext{whereas the minimum distance of codes $(\!(n,K,d)\!)_2$ with $K>1$ is bounded by}
 d &\leq 
 \begin{cases}
  2\lfloor \frac{n+1}{6} \rfloor + 2 \q\text{ if } n = 4 \bmod 6; \\
  2\lfloor \frac{n+1}{6} \rfloor + 1 \q\text{ otherwise} \,.
 \end{cases}
\end{align}

As done for the case of AME states in Ref.~\cite{1751-8121-51-17-175301},  
it is possible to evaluate Corollary~\ref{cor:shadow_bound} numerically 
for any QMDS code having small enough parameters.
This leads to new bounds on the existence of QMDS codes in dimensions \(D\leq 5\), see Appendix~\ref{app:constructions_tables}.

\section{QMDS conjecture}\label{sect:conjecture}

The following conjecture relating the maximal length of QMDS codes and
the local dimension $D$ is of interest.  It follows from the classical
MDS conjecture, and thus concerns itself with QMDS codes of stabilizer
type only.  The MDS conjecture for classical codes states that
the length of a non-trivial linear MDS code over the field $GF(q)$ is bounded
by $n\le q+1$, with the exception of $q=2^m$ where
additionally codes with parameters $[q+2,3,q]_q$ and $[q+2,q-1,4]_q$
exist.  \cite[Research Problem (11.4)]{MacWilliams1981}.  Applying
this conjecture to the classical codes over the field $GF(q^2)$
corresponding to stabilizer codes, one obtains the following
conjecture.
\begin{conjecture}[QMDS Conjecture, Cor.~$65$ in Ref.~\cite{1715533}]\label{conj:QMDS}
 With exception of \([\![D^2+2, D^2-4, 4]\!]_D\)
 with \(D=2^m\) where $n\leq D^2+2$~(cf. Thm. 14 in Ref.~\cite{7282626}),
 the length of all {\em stabilizer} QMDS codes with \(d \geq 3\) is bounded by 
 \( n\leq D^2+1\).\footnote{
 Note that while in Corollary~$65$ of Ref.~\cite{1715533} the case $d=q^2$ and $n \le q^2+2$ is 
listed as well, this is already excluded by the quantum Singleton bound if $q>2$.}
\end{conjecture}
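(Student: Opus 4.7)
The plan is to reduce Conjecture~\ref{conj:QMDS} to the classical MDS conjecture over $GF(D^2)$ via the standard correspondence between stabilizer codes and additive codes. First I would invoke the fact (due to Calderbank--Rains--Shor--Sloane and Ashikhmin--Knill) that a stabilizer code $\stabQECC(n,k,d,D)$ over a prime power $D$ arises from an additive code $C\subseteq GF(D^2)^n$ of size $D^{n-k}$, self-orthogonal with respect to the trace-symplectic form, whose symplectic dual $C^\perp$ has minimum weight $d$. By Theorem~\ref{prop:QMDS_pure} every QMDS code is pure, so $C^\perp$ is itself an additive MDS code of length $n$ and ``dimension'' $(n+k)/2$ over $GF(D^2)$. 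The target bound on $n$ is thus nothing but the MDS conjecture applied to $C^\perp$.

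Second, I would invoke the classical MDS conjecture: over $GF(q)$ with $q=D^2$, every non-trivial linear MDS code of dimension $2\leq k'\leq q-1$ has length $n\leq q+1$, with the single exception in even characteristic giving the extended Reed--Solomon families $[q+2,3,q]_q$ and $[q+2,q-1,4]_q$. Translated back, this yields the generic bound $n\leq D^2+1$ and the exceptional window $n\leq D^2+2$ available only when $D=2^m$ and $d=4$, precisely matching the statement. The dual exceptional family $[q+2,q-1,4]_q$ corresponds to a stabilizer code with $d=D^2$, which is already ruled out by the quantum Singleton bound whenever $D>2$, so no second exception survives.

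The main obstacle is twofold. First, the classical MDS conjecture itself remains open in full generality; it is proved over prime fields by Ball, and in various further cases, but not in the generality needed here, so any proof along this route is necessarily conditional on the classical conjecture. Second, and more subtly, stabilizer codes correspond to \emph{additive} (not necessarily $GF(D^2)$-linear) codes; one must therefore either argue that a trace-symplectic self-orthogonal additive MDS code of the required parameters can be upgraded to a linear MDS code over a suitable subfield, or develop an additive analogue of the arcs-in-projective-space machinery underlying Ball's proof. Extending the projective-geometric tools from linear to merely additive codes is, in my view, the genuine hard step, since linearity is exactly the ingredient those methods exploit.

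As a complementary, potentially unconditional line of attack, I would push the shadow-inequality framework of Section~\ref{sect:shadow_bounds} further. Theorem~\ref{thm:QMDS_bound} already gives $n\leq D^2+d-2$ for \emph{all} QMDS codes, so the gap to $n\leq D^2+1$ shrinks with $d$ and the hardest residual cases are small $d$. For these one would analyse the full system of shadow enumerator identities, supplemented by the explicit weights from Theorem~\ref{thm:QMDS_weights} and linear-programming bounds on the $A_j$, to try to rule out the intermediate lengths $D^2+2,\dots,D^2+d-3$ unconditionally. The challenge is that these inequalities are symmetric under permutations and encode only averaged purities, so closing the last units of slack likely requires additional structural input specific to self-dual additive codes, such as invariance of their weight enumerators under the Clifford-group action.
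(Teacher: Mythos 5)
This statement is a conjecture, and the paper does not prove it: it presents it exactly as you do, as the image of the classical MDS conjecture under the stabilizer correspondence of Appendix~\ref{app:stab_AME} (a pure $[\![n,n-2d+2,d]\!]_D$ stabilizer code gives a trace-alternating self-orthogonal additive code whose dual is an additive MDS code of length $n$ and dimension $n-d+1=(n+k)/2$ over $GF(D^2)$), with the second classical exceptional family discarded by the quantum Singleton bound, and it explicitly leaves the statement conditional --- only the $d=3$ case is settled unconditionally, via Theorem~\ref{thm:QMDS_bound}. Your account of the obstacles matches the paper's stance (the classical conjecture is open beyond Ball's prime-field case, and the correspondence yields \emph{additive} rather than linear codes, which is also why the paper poses its Research Problem), and your complementary shadow-enumerator/linear-programming attack is precisely what the paper carries out for small $D$ in Section~\ref{sect:shadow_bounds}. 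One small slip: under your own identification, where the MDS code in play is the dual $C^{\perp}$ of dimension $n-d+1$, the family $[q+2,q-1,4]_{q}$ (with $q=D^2$) is the one corresponding to the surviving quantum exception $d=4$, $n=D^2+2$, while $[q+2,3,q]_{q}$ corresponds to $d=D^2$ and is the case excluded by the quantum Singleton bound for $D>2$; you have the two swapped, though this does not affect the conclusion.
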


The strongest confirmation of the classical MDS conjecture was proven
in a seminal work by Ball, which showed that the conjecture is true
for linear $q$-ary codes when $q$ is prime~\cite{Ball2012}.
Even when the classical MDS conjecture turns out to be true,
Conjecture~\ref{conj:QMDS} could be violated by non-stabilizer QMDS
codes.  

On the other hand, our bound (Theorem~\ref{thm:QMDS_bound})
constrains the length of QMDS codes for distance \(d=3\) to \(n\leq
D^2+1\), confirming the QMDS Conjecture for this choice of distance.
For \(d=4\), our bound can be met when \(D=2^m\) (Thm. 14 in
Ref.~\cite{7282626}).  In general, however, Conjecture~\ref{conj:QMDS}
is still unresolved for \(d>3\).

From the bound in Theorem~\ref{thm:QMDS_bound} it is seen that QMDS
codes with distance \(d\geq 3\) can only exist if \(n+k\leq6\) for
qubits, \(n+k\leq16\) for qutrits, \(n+k\leq30\) for ququarts, and
\(n+k\leq 48\) in the case of local dimension \(D=5\).  Thus for
qubits, no other non-trivial QMDS codes exist apart from those
with the parameters of the known stabilizer codes \([\![6,0,4]\!]_2\)
and \([\![5,1,3]\!]_2\).  In the case of qutrits, only seven QMDS
families exist; for five of these, the optimal parent code has already
been found (see Table~\ref{tab:bounds_D=3}).

\section{Conclusions}\label{sect:conclusions}
It is readily seen that quantum maximum distance separable (QMDS) codes must
correspond to subspaces in which every unit vector shows maximal entanglement 
across all bipartitions where the smaller partition has size $(d-1)$.
The question under what conditions such codes exist is thus not only relevant 
in coding theory, but also for the study of multipartite entanglement.

Interestingly, all QMDS codes can be grouped into QMDS families whose
members can be regarded as being obtained by partial trace from a
(possibly hypothetical) parent code of larger length and distance.
Since all descendants within a QMDS family form codes of smaller
distance themselves, their spectra are completely determined by the
parameters of their parent code.  This insight completely determines
the weight enumerator of QMDS codes.  It also leads to a bound
applicable to all (stabilizer and non-stabilizer) QMDS codes,
extending results known for the special cases of stabilizer QMDS codes
and absolutely maximally entangled states.  Moreover, the application
of Rains' shadow inequalities yields additional non-existence results.

The quantum Singleton bound is independent of the local dimension
\(D\) and one thus cannot expect it to be particularly strong.
However, if the Singleton bound can be met, classical codes are in all
known cases origin of these optimal quantum codes and highly entangled
subspaces.  More precisely, the majority of non-trivial QMDS codes in
the literature are of stabilizer type and hence based on classical
additive or linear MDS codes. There are also some examples of
non-stabilizer (also called {\em non-additive}) QMDS codes, in
particular codes of distance two \cite[Thm.~$7$]{746807}.  However,
putting these non-additive codes into the framework of
so-called union stabilizer codes or CWS codes (see
\cite[Chapter~10]{Brun_Lidar_QECC}), one finds a connection to
classical non-additive MDS codes as well.

It is an open question if this must generally be the case
and we state the problem more formally:
\begin{researchproblem}
 Is every quantum maximum distance separable code
 related to a classical MDS code?
\end{researchproblem}
It is indeed intriguing that hitherto no genuine ``quantum'' constructions
have been found that surpass their classical counterparts for these
types of codes. We note that an affirmative answer to this question 
would also reduce the question of the existence of absolutely maximally 
entangled states for some given even number of parties and local dimension 
to merely a finite computational problem (also see Problem $3$. in Ref.~\cite{KCIK2019}).

An interesting aspect seen here is that ``optimal'' codes
that have the largest possible distance 
must necessarily also exhibit the highest possible bipartite entanglement 
amongst the constituent particles.
One can readily expect a trade-off to be present, 
where large values of $k/n$ and $d/n$ necessarily go hand in 
hand with a highly entangled code space, 
whereas lowly entangled subspaces can only yield low values.
Indeed, such a trade-off can be seen in Eq.~\eqref{eq:QMDS_purity_inequality},
quantified by the average entropy of entanglement. 
A precise understanding of this trade-off might pave the way to 
derive stronger bounds on the performance of quantum codes, 
and could possibly help to explain the distance scalings found 
in low-density parity check codes~\cite{6671468}.

To conclude, QMDS codes present themselves as a rich playground: 
they form nested subspaces that are highly entangled and prove to be 
a testing ground for our understanding of multipartite entanglement. 
The discovery of further monogamy relations as well as entropic and 
rank inequalities would likely find an immediate application in stronger 
bounds on the existence of these ideal quantum objects.

\section{Acknowledgments}
We thank Daniel Alsina and Simeon Ball for fruitful discussions, 
Alexander Müller-Hermes for comments on the Entropy Lemma,
and Andreas Winter for kindly communicating his proof.
FH thanks David Gross and Otfried Gühne for their support, 
during which significant part of this work was carried out.
This work was supported by 
the Swiss National Science Foundation (Doc.Mobility 165024), 
the ERC (Consolidator Grant 683107/TempoQ),
the DFG (SPP1798 CoSIP),
the Excellence Initiative of the German Federal and State Governments (Grant ZUK 81), 
the Spanish MINECO (QIBEQI FIS2016-80773-P and Severo Ochoa SEV-2015-0522),
Generalitat de Catalunya (SGR 1381 and CERCA Programme),
and the Fundació Privada Cellex.
MG acknowledges partial support by the Foundation for Polish Science
(IRAP project, ICTQT, contract no. 2018/MAB/5, co-financed by EU
within the Smart Growth Operational Programme).

% balancing of columns
\onecolumngrid
\bigskip
\bigskip
\noindent\hrulefill
\bigskip

\twocolumngrid
\appendix

\section{Proofs of the quantum Singleton bound}
\label{app:quantum_singleton_bound}
We present two known proofs for the quantum Singleton bound below.
\begin{theorem}[Quantum Singleton bound~\cite{782103, PhysRevA.56.1721,Ashikhmin1997,AshikhminLitsyn99}]
 Let \((\!(n,K,d)\!)_D\) be a QECC. Its parameters are bounded by
 \begin{equation}
  K \leq D^{n-2(d-1)}\,.
 \end{equation}
\end{theorem}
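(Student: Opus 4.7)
The plan is to invoke the entropic characterization of the distance condition established at the end of Section~\ref{sect:QECC} and reduce the bound to weak subadditivity of the von Neumann entropy. Purify $\r = \Pi_\QQ/K$ with a reference system $R$ of dimension $K$ to obtain $\ket{\phi_\QQ} \in \HH_R \otimes (\C^D)^{\ot n}$. Partition the $n$ physical sites into three disjoint subsets $A$, $B$, $C$ with $|A| = |B| = d-1$ and $|C| = n - 2(d-1)$; this requires $n \geq 2(d-1)$, which is the only interesting regime (for smaller $n$ the claimed exponent is negative and the inequality can only be satisfied by the trivial code $K=1$, which I would dispatch separately).

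The first step is to apply the entropic error-correction condition twice: since $|A|, |B| < d$ one has $\r_{RA} = \r_R \ot \r_A$ and $\r_{RB} = \r_R \ot \r_B$, hence $S(RA) = S(R) + S(A)$ and $S(RB) = S(R) + S(B)$. Since $\ket{\phi_\QQ}$ is pure on $RABC$, complementary subsystems carry the same entropy, so these rewrite as
\begin{align}
S(BC) &= S(R) + S(A)\,, & S(AC) &= S(R) + S(B)\,.
\end{align}

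The second step is weak subadditivity applied to $\r_{BC}$ and to $\r_{AC}$, yielding $S(BC) \leq S(B) + S(C)$ and $S(AC) \leq S(A) + S(C)$. Substituting the identities from the previous step, summing the two inequalities, and cancelling the common $S(A) + S(B)$ term leaves $2 S(R) \leq 2 S(C)$, i.e.\ $S(R) \leq S(C)$. Since $\r_R = \one/K$ gives $S(R) = \log K$ and the dimensional bound gives $S(C) \leq |C| \log D = (n-2(d-1))\log D$, combining the two yields the claimed $K \leq D^{n-2(d-1)}$.

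The genuinely non-trivial input is the equivalence between the Knill--Laflamme form of Eq.~\eqref{def:code} and the factorization $\r_{RA} = \r_R \ot \r_A$, which is already supplied in Section~\ref{sect:QECC} via the equality case of strong subadditivity, so I would take it as given. Beyond that the argument is just weak subadditivity plus a few lines of entropy bookkeeping, and I do not anticipate any serious obstacle; the only mildly delicate point is the edge case $n < 2(d-1)$, where the three-part partition degenerates but the bound itself becomes vacuous or trivially forces $K = 1$.
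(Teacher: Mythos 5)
Your argument is essentially identical to the paper's second proof of the quantum Singleton bound (the Cerf--Cleve argument given in Appendix~\ref{app:quantum_singleton_bound}): purification with a reference $R$, the three-part split $A,B,C$ with $|A|=|B|=d-1$, the factorization $\varrho_{RA}=\varrho_R\ot\varrho_A$ from the distance condition, and two applications of subadditivity to conclude $\log K = S(R)\leq S(C)\leq (n-2(d-1))\log D$. The only point handled slightly differently is the regime $n<2(d-1)$, which the paper's version excludes for $K>1$ by a no-cloning argument rather than deferring it, but this does not change the substance of the proof.
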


\begin{proof}[Proof 1: (Rains, Thm.~$2$ in Ref.~\cite{782103})]
 Let us first show that \(2(d-1) \leq n\). Assume that $2(d-1)>n$ and consider $K=1$: 
 by convention, codes with \(K=1\) are only considered codes if they are pure, 
 and thus must have \(\tr_{S^c}\dyad{\phi} \propto \one\) for all \(|S|<d\). 
 From the Schmidt decomposition however it is seen that it is impossible that 
 marginals of size \(\nhf+1\) are of full rank, and thus \(2(d-1) \leq n\).
 Consider now $K>1$: in terms of the unitary weight enumerators, the conditions 
 for a projector \(\Pi_\QQ\) to be a QECC subspace read \(KB'_j(\Pi_\QQ) = A'_j(\Pi_\QQ)\) 
 for all \(j<d\) [Eq.~\eqref{eq:code_cond_unitary_enum}]. Also recall that by definition
 \(A'_j = B'_{n-j}\). If $2(d-1)>n$, one thus requires that
 \begin{equation}
  A'_{d-1} = K B'_{d-1} = K A'_{n-(d-1)}\,,
 \end{equation}
 and, due to \(n-(d-1) < (d-1)\), also that
 \begin{equation}
  A'_{n-(d-1)} = KB'_{n-(d-1)} = KA'_{d-1}\,,
 \end{equation}
 leading to a contradiction also for \(K>1\).

 Consequently, \(2(d-1) \leq n\). With the decompositions from Eq.~\eqref{eq:unitary_decomposition}, 
 one has that
 \begin{equation}
  A'_{d-1} = D^{-d+1}\sum_{i=0}^{d-1} \binom{n-i}{n-d+1} A_i\,,
 \end{equation}
 but also\footnote{The arXiv version (\href{https://arxiv.org/abs/quant-ph/9703048}{quant-ph/9703048}) of Ref.~\cite{782103} 
 contains in the corresponding formulae erroneous factors of \((D-1)^i\).}
 \begin{align}
  A'_{d-1} &= KB'_{d-1} = KA'_{n-d+1} \nn\\
  &= K D^{-n+d-1} \sum_{i=0}^{n-d+1} \binom{n-i}{d-1} A_i\,.
 \end{align}
With \(\binom{n-i}{n-d+1} = \binom{n-i}{d-1-i}\), the quantum Singleton bound follows 
from the analysis of 
 \begin{align}\label{eq:Singleton_sum}
  0 = A'_{d-1} - A'_{d-1} &=  K D^{-n + d-1} \sum_{i=0}^{n-d+1} \binom{n-i}{d-1}  A_i \nn\\
  &- D^{-d+1} \sum_{i=0}^{d-1} \binom{n-i}{d-1-i} A_{i}\,.
 \end{align} 
 Consider the coefficient \(A_i\) for \(0\leq i<d \). 
 \begin{equation} \label{eq:Singleton_coeff}
  K D^{-n+d-1} \binom{n-i}{d-1} - D^{-d+1} \binom{n-i}{d-1-i}\,.
 \end{equation} 
Note that 
\begin{align}
 &  \frac{\binom{n-i}{d-1}}{\binom{n-i}{d-1-i}} 
 =  \frac{(d-1-i)! (n-d+1)!}{(d-1)! (n-i-d+1)!} \displaybreak\nn\\ 
 =\,& \frac{(n-d+1)(n-d)\cdots (n-d+2-i)}{(d-1)(d-2)\cdots(d-i)} \geq 1\,,
 \end{align}
 because $n-d+1 \geq d-1 $ as established previously.
 If \(K > D^{n-2(d-1)} > 1\), the expression~\eqref{eq:Singleton_coeff} must 
 be non-negative due to \(A_i\geq 0\),  and it is furthermore strictly positive 
 in the case of \(i=0\) due to \(A_0 = K\).
 Consequently Eq.~\eqref{eq:Singleton_coeff} can only vanish if at least \(K\leq D^{n-2(d-1)}\).
This proofs the claim.
 \end{proof}

\begin{proof}[Proof 2: (Cerf \& Cleve~\cite{PhysRevA.56.1721})]
For this proof we only consider the case $K>1$. 
Then the distance must be bounded by \(2(d-1) \leq n\), for if not, 
two copies of the encoded state could be recovered each from reductions
of size \(n-(d-1) < d-1\), violating the no-cloning theorem.

Let \(\Pi_\QQ = \sum_{i=1}^K \dyad{i_{\QQ}}\) be the projector onto the code space.
The purification with a reference system~\(R\) leads to, 
 \begin{equation}
  \ket{\psi_{QR}} = \frac{1}{\sqrt{K}}\sum_{i=1}^K \ket{i_{\QQ}}\ot \ket{i_R}\,,
 \end{equation} 
 where \(\ket{i_R}\) is any orthonormal basis for \(R\).  Recall that
 the von Neumann entropy is defined as \(S(\varrho) = -\tr \r \log \r\).
 Let us partition the code into the three subsystems \(A,B,C\), such
 that \(|A|=|B|=d-1\) and $|C|=n-2(d-1)$.  Then \(S_R = S(
 \tr_{ABC}[\r]) = \log(K)\).  As the code has distance~\(d\), any
 subsystem of size strictly smaller than $d$ cannot reveal anything
 about the reference system $R$: indeed the condition of \(\varrho_{RA} =
 \varrho_R \ot \varrho_A\) is known to be a necessary and sufficient condition
 for the subsystem $A$ to be correctable~\cite{NielsenChuang2011};
 this is also equivalent to
 \(S_{RA} = S_R + S_A\).  With the subadditivity of the von Neumann
 entropy, namely \(S_{12} \leq S_1 + S_2\), this leads to
 \begin{align}
  S_R + S_A &= S_{RA} = S_{BC} \leq S_B + S_C\,,  \\
  S_R + S_B &= S_{RB} = S_{AC} \leq S_A + S_C\,,
 \end{align}
where we used that the entropies of complementary subsystems are equal for a pure state.
 The combination of the above two inequalities yields
 \(\log(K) = S_R \leq S_C\leq \log \dim (\HH_C) = \log D^{n-2(d-1)}\). This proofs the claim.
 \end{proof}
 
A third proof of the quantum Singleton bound using linear
programming can be found in Refs.~\cite{Ashikhmin1997,AshikhminLitsyn99}.

\section{Entropy lemma}\label{app:entropy_lemma}

\begin{lemma}[Winter~\cite{Winter_private-comm_QMDS-purity_2019}]
Let $n \geq m > \ell$. Then
\begin{equation}
 \frac{1}{\binom{n}{m}}\sum_{\substack{I \subseteq \{1,\dots,n\}\\ |I|=m}} S(I) \,\leq \,
 \frac{m}{\ell} \frac{1}{\binom{n}{\ell}}\sum_{\substack{J \subset \{1,\dots,n\}\\ |J|= \ell}} S(J)\,.
\end{equation} 
\end{lemma}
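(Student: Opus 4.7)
The inequality is the quantum analogue of Han's classical subadditivity inequality, and I would prove it by establishing monotonicity of a certain averaged entropy. Setting
\[
\overline{S}_k := \binom{n}{k}^{-1}\sum_{|I|=k} S(\varrho_I),
\]
the statement reads $\overline{S}_m/m \leq \overline{S}_\ell/\ell$, i.e.\ the sequence $\overline{S}_k/k$ is non-increasing in~$k$. Since $m > \ell$, it suffices to prove the one-step monotonicity $\overline{S}_{k+1}/(k+1) \leq \overline{S}_k/k$ for each $\ell \leq k \leq m-1$ and chain the inequalities together.

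The one-step inequality will follow from a pointwise (set-wise) Han inequality: for every $(k+1)$-subset $I \subseteq \{1,\ldots,n\}$,
\[
k\,S(\varrho_I) \,\leq\, \sum_{i \in I} S(\varrho_{I \setminus \{i\}}).
\]
Assuming this, I would sum over all $(k+1)$-subsets $I \subseteq \{1,\dots,n\}$, observe that every $k$-subset $J$ appears on the right-hand side exactly $n-k$ times (once for each element that extends $J$ to a $(k+1)$-subset), and use the identity $(k+1)\binom{n}{k+1}=(n-k)\binom{n}{k}$ to collapse the double sum, yielding
\[
k\binom{n}{k+1}\overline{S}_{k+1} \,\leq\, (n-k)\binom{n}{k}\overline{S}_k \,=\, (k+1)\binom{n}{k+1}\overline{S}_k,
\]
which is precisely $\overline{S}_{k+1}/(k+1) \leq \overline{S}_k/k$.

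The pointwise Han inequality itself is the main obstacle. I would prove it by induction on $|I|$, with base case $|I|=2$ given by ordinary subadditivity $S(\varrho_{AB}) \leq S(\varrho_A) + S(\varrho_B)$. For the inductive step I would express the slack $\sum_{i \in I}S(\varrho_{I\setminus\{i\}}) - k\,S(\varrho_I)$ as a nonnegative linear combination of strong-subadditivity quantities $S(\varrho_{XZ})+S(\varrho_{YZ})-S(\varrho_{XYZ})-S(\varrho_Z) \geq 0$, invoked over suitable tripartitions $I = X \sqcup Y \sqcup Z$; the delicate combinatorial step is choosing the weights so that every marginal other than $S(\varrho_I)$ and the $(|I|-1)$-marginals cancels. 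An equivalent route, mirroring the $|I|=3$ case where summing three cyclic SSA instances produces $3 S(\varrho_I) + \sum_i S(\varrho_{\{i\}}) \leq 2\sum_{i<j} S(\varrho_{\{i,j\}})$ which combines with subadditivity $\sum_i S(\varrho_{\{i\}}) \geq S(\varrho_I)$ to yield Han, is to average SSA over all linear orderings of $I$ and close with the inductive hypothesis on singleton-augmented marginals.
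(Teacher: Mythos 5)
Your reduction of the lemma to the pointwise inequality $k\,S(\varrho_I)\le\sum_{i\in I}S(\varrho_{I\setminus\{i\}})$ for every $(k+1)$-subset $I$, followed by the double count (each $k$-subset extends to a $(k+1)$-subset in $n-k$ ways) and the chaining of $\overline{S}_{k+1}/(k+1)\le\overline{S}_k/k$ from $\ell$ up to $m$, is correct and is in fact the same skeleton as the paper's proof: the paper proves exactly this one-step inequality (stated for the full system) and then says "repeated application" yields the lemma, a step your averaging identity makes explicit. The genuine gap is in the core step. You flag the pointwise quantum Han inequality as "the main obstacle" and offer only a plan: an induction whose inductive step hinges on "choosing the weights so that every marginal other than $S(\varrho_I)$ and the $(|I|-1)$-marginals cancels" (the weights are never exhibited), or an average over orderings closed "with the inductive hypothesis" (equally unexecuted). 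Only $|I|=3$ is actually verified. Since this inequality is the entire analytic content of the lemma, the proposal as written asserts rather than proves it.

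The fix is short, and it is where the paper's route is cleaner: no induction or delicate weight choice is needed. The paper purifies $\varrho_I$ with a reference $R$, so that $S(\varrho_I)=S(R)$ and $S(\varrho_{I\setminus\{i\}})=S(X_iR)$; the pointwise inequality is then equivalent to $-S(R)=S(X_{i_1}\cdots X_{i_{k+1}}\mid R)\le\sum_{i\in I}S(X_i\mid R)$, where the equality is purity of the joint state on $X_IR$ and the inequality is subadditivity of the conditional entropy, i.e.\ iterated strong subadditivity. Equivalently, without purifying, fix any ordering of $I$ and sum the SSA instances $S(X_{i_j}\mid X_{i_1}\cdots X_{i_{j-1}}X_{i_{j+1}}\cdots X_{i_{k+1}})\le S(X_{i_j}\mid X_{i_1}\cdots X_{i_{j-1}})$: the prefix entropies telescope and give $k\,S(\varrho_I)\le\sum_{i\in I}S(\varrho_{I\setminus\{i\}})$ with unit weights, so the cancellation you anticipated as delicate is automatic. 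Inserting either argument closes your proof; without it, the proposal is incomplete precisely at the step on which everything rests.
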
 
\begin{proof}

For any subset $T\subseteq\{1,\ldots,n\}$, 
denote by $X_T$ the combination of the subsystems 
$\{X_i \colon i\in T\}$.
We first aim to show that 
\begin{equation}\label{eq:lemma_eq_1}
 S(X_{\{1,\dots, n\}} ) \leq \frac{1}{n-1} \sum_{i=1}^n S(X_{\{1,\dots,n\} \backslash\{i\}})\,.
\end{equation} 
For this, we purify the state with a reference system~$R$.
Then Eq.~\eqref{eq:lemma_eq_1} is equivalent to
\begin{equation}
 (n-1)S(R) \leq \sum_{i=1}^n S(X_i R)\,.
\end{equation} 
Rewritten in terms of the conditional von Neumann entropy $S(A|B) = S(AB) - S(B)$ yields
\begin{equation}
 -S(R) \leq \sum_{i=1}^n S(X_i|R)\,.
\end{equation}
To see that this holds, note that
\begin{equation}
 -S(R) = S(X_{\{1,\dots ,n\}}|R) \leq \sum_{i=1}^n S(X_i|R)\,,
\end{equation}
where the equality follows from the fact that the state on the entire
system $X_1\dots X_n R$ is pure, and the inequality follows from
strong subadditivity.
The repeated application of Eq.~\eqref{eq:lemma_eq_1} yields
\begin{equation}
 \frac{1}{m} \frac{1}{\binom{n}{m}}\sum_{|I|=m} S(I) \leq \frac{1}{\ell} \frac{1}{\binom{n}{\ell}} \sum_{|J|=\ell} S(J)\,.
\end{equation}
This completes the proof.
\end{proof}

A slightly more general form of above Lemma, the ``quantum Shearer's inequality'',
and its history can be found in Ref.~\cite{doi:10.1063/1.4939560}.

\begin{table*}[btp]\footnotesize
\tabcolsep0.5\tabcolsep
\footnotesep0.5\footnotesep % decrease vertical spacing between footnotes
\begin{tabular}{@{}l@{\quad} l l@{\quad} l@{}}
\textbf{Some QMDS constructions} \vspace{0.1em} \\ 
\hline
short (Gilbert-Varshamov)~\cite[Cor.~$32$]{1715533}: && \([\![n,n-2d+2,d]\!]_q\) &\( 2\leq d \leq \lceil \frac{n}{2} \rceil, \q \binom{n}{d} \leq q^2-1 \)   \\
Euclidean (CSS)~\cite[Cor.~$1$]{1365393}:\footnote{In Theorem~$14$ of Ref.~\cite{quant-ph/0312164v1}, 
and subsequently also in the overview table of Ketkar et al.~\cite{1715533}, 
only the upper bound $n\le q$ is given.
The bound $n\le q+1$ follows from Corollary $1$ of Ref.~\cite{1365393} 
(note that there, the condition $q>2$ for $n=q+1$ is missing).}
&& \([\![n,n-2d+2,d]\!]_q\) 		& \(1\leq d \leq \nhf +1, \q 3 \leq n \leq q+1\) 
for $2<q$\\
punctured GRM~\cite[Thm.~$5$ \& Cor.~$6$]{1523494}: && \([\![q^2-q\a, q^2-q\a-2d+2, d]\!]_q\)  & \(2 \leq d \leq q, \q 0\leq \a \leq q-d+1\)\\
Hermitean~\cite[Thm.~$14$]{quant-ph/0312164v1}:\footnote{%
Further details on what values $s$ can take can be found in Refs.~\cite{7282626, quant-ph/0312164v1}.}
&& \([\![q^2-s, q^2-s-2d+2, d]\!]_q\) 	& \(2 \leq d \leq q, \q 
\,\,\,\,\q\text{for some $0 \leq s<q$, incl. $s=0,1$}\)\\
single-error~\cite[Cor.~$3.6$]{5550401}:\footnote{%
The case where $q$ is odd also appears in Theorem~$1.1$ of Ref.~\cite{PhysRevA.82.052316}.}
&& \([\![n ,n-4, 3]\!]_q\) 			& $4 \leq n \leq q^2+1$\q except $q=2$ with $n=4$\\
Grassl/Rötteler I~\cite[Thm.~$13$]{7282626}:
&&\( [\![q^2 + 1, q^2 -2d + 3, d]\!]_q\) 	& \(2\leq d\leq q+1\), \q for $q$ odd or ($q$ even and $d$ odd)\\
Ball~\cite[Thm.~$4$]{Ball2019}:
&&\( [\![q^2 + 1, q^2 -2d + 3, d]\!]_q\) 	& \(2\leq d\leq q+1\), \q for $d \neq q$ \\
Grassl/Rötteler II~\cite[Thm.~$14$]{7282626}: 	&& \([\![q^2+2 , q^2-4, 4]\!]_q\) 	& \(q= 2^m\)\\
trivial~\cite[Thm.~$12$]{7282626}: 	    	&& \([\![n,n-2,2]\!]_D\) 		&\(n\) even and $(D$ odd or a multiple of $4)$\\
\hline
\end{tabular}
\caption{\label{tab:constructions}
Some known QMDS constructions when \(q=p^\b\) is a power of prime (except for the trivial QMDS). 
The table is partially adopted from Ref.~\cite{1715533}. Apart from
the trival QMDS codes with $d=2$, all codes are stabilizer codes.}
\end{table*}

\section{QMDS stabilizer codes and AME states}\label{app:stab_AME}
In order to set the bound appearing in
Section~\ref{sect:max_len_QMDS}, Theorem~\ref{thm:QMDS_bound}
into context, we shortly state the previously known bounds on
stabilizer and largest-distance QMDS codes.

Stabilizer codes are constructed from Abelian subgroups of
{\em nice error bases} not containing a non-trivial multiple of the 
identity.\footnote{
A nice error basis is a set of unitary matrices $\{E_g\}$ corresponding 
to a group $G$
such that $E_1 = \one$ and $E_g E_h = \omega_{g,h}
E_{gh}$~\cite{1019848}.}
When the local dimension $D=q=p^m$ is a power of a prime, such
Abelian subgroups correspond to additive codes $\CC$ over the finite
field $\F_{q^2}$ with $q^2$ elements. Additionally, the code $\CC$ is
contained in its dual, $\CC\subseteq \CC^{\perp_a}$, with respect to
the trace-alternating form on $\F_{q^2}^n$, given by
\begin{alignat}{5}\label{eq:trace-alternating-form}
\langle x,y\rangle_a = \tr_{\F_q/\F_p}\left(\frac{x\cdot y^q-x^q\cdot 
y}{\beta^{2q}-\beta^q}\right).
\end{alignat}
Here $(\beta,\beta^q)$ is a normal basis of $\F_{q^2}$ over $\F_q$
and the trace function for $q=p^m$ is defined as
$\tr_{\F_q/\F_p}(a)=\sum_{i=0}^{m-1} a^{q^i}$.

Thus, in the context of quantum MDS codes of stabilizer
type, their correspondence to classical MDS codes is of
relevance.
\begin{proposition}[Thm. 15 and Lemma 61 in \cite{1715533}]
The existence of the following is equivalent:\\
\noindent 1. an $[\![n,n-2d+2,d]\!]_q$
QMDS stabilizer code; \\
\noindent 2. an $[n,d-1,n-d+2]_{q^2}$ additive MDS code
$\CC\subset\F_{q^2}^n$ that is contained in its dual, 
$\CC\subset\CC^{\perp_a}$.
\end{proposition}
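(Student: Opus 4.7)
The plan is to invoke the standard stabilizer-to-additive-code correspondence and then transfer the MDS property across the trace-alternating duality. First I would recall that for $q=p^m$ a prime power, an $[\![n,k,d]\!]_q$ stabilizer code is determined by an Abelian subgroup $S$ of the $n$-qudit Pauli group modulo its center, with $|S|=q^{n-k}$. Using a nice error basis and a normal basis $(\beta,\beta^q)$ of $\F_{q^2}/\F_q$, this $S$ is in bijection with an additive subgroup $\CC\subseteq\F_{q^2}^n$ of cardinality $q^{n-k}$, and the Abelian condition on $S$ translates into $\CC\subseteq\CC^{\perp_a}$ with respect to the trace-alternating form of Eq.~\eqref{eq:trace-alternating-form}. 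Moreover, the code distance equals $\min\{\wt(v):v\in\CC^{\perp_a}\setminus\CC\}$, and equals $\min\{\wt(v):v\in\CC^{\perp_a}\setminus\{0\}\}$ whenever the code is pure.

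Next I would plug in the QMDS parameters $k=n-2(d-1)$. This gives $|\CC|=q^{2(d-1)}=(q^2)^{d-1}$ and $|\CC^{\perp_a}|=q^{n+k}=(q^2)^{n-d+1}$, so both are even powers of $q$ and $\CC$, $\CC^{\perp_a}$ have $\F_{q^2}$-dimensions $d-1$ and $n-d+1$ respectively. By Theorem~\ref{prop:QMDS_pure}, the QMDS code is pure, hence the minimum nonzero weight of $\CC^{\perp_a}$ is exactly $d$. Since $d=n-(n-d+1)+1$, this saturates the Singleton bound and exhibits $\CC^{\perp_a}$ as an additive MDS code of the asserted parameters.

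The remaining step is to transfer the MDS property from $\CC^{\perp_a}$ to $\CC$ itself. For this I would invoke the additive MacWilliams identity for the trace-alternating form, together with the standard argument that the weight enumerator of an MDS code has prescribed vanishing coefficients in the lowest positions, which forces the dual weight enumerator to vanish in the complementary positions. Applied here, this yields $\CC$ as an additive MDS code of $\F_{q^2}$-dimension $d-1$ and minimum weight $n-(d-1)+1=n-d+2$. The reverse direction runs symmetrically: starting from any additive MDS $\CC\subseteq\CC^{\perp_a}$ with the stated parameters, additive MDS duality hands back $\CC^{\perp_a}$ with minimum weight $d$, and pulling $\CC$ through the correspondence produces a stabilizer code $[\![n,n-2d+2,d]\!]_q$; the inclusion $\CC\subseteq\CC^{\perp_a}$ together with the saturated additive Singleton bound guarantees purity and hence the claimed distance.

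The main obstacle is the additive MDS-duality step. Unlike the linear case, where the MacWilliams transform and the conclusion ``dual of MDS is MDS'' are textbook, the additive version must be set up over the character group of $\F_{q^2}^n$ with respect to the non-standard trace-alternating pairing, and one needs to verify that the vanishing-coefficient characterization of MDS transfers cleanly. In a full write-up I would invoke the additive MacWilliams identity of Ref.~\cite{1715533} as a black box rather than rederiving it, reducing the proposition to the bookkeeping of dimensions and weights sketched above.
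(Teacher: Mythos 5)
The paper does not prove this proposition itself --- it is quoted verbatim from Ketkar et al.\ (Thm.~15 and Lemma~61 of Ref.~\cite{1715533}) --- and your argument follows essentially that same standard route: the stabilizer/additive-code correspondence under the trace-alternating form, purity of QMDS codes to identify the minimum weight of $\CC^{\perp_a}$ with $d$, and the additive MacWilliams identity to carry the MDS property between $\CC^{\perp_a}$ and $\CC$. Your bookkeeping of sizes and distances is correct, and black-boxing the additive MDS-duality step (dual of an additive MDS code is MDS) is legitimate since that is exactly what the cited reference supplies.
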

Note that the dual code $\CC^{\perp_a}$ is also an MDS code with
parameters $[n,n-d+1,d]_{q^2}$.

The following is known on the maximal length of stabilizer QMDS codes.

\begin{proposition}[Maximal length of QMDS {\em stabilizer} codes, 
Thm. $63$ in Ref.~\cite{1715533}] \label{prop:QMDS_stab_bound}
 Let \(\QQ = [\![n,k,d]\!]_D\) be a QMDS stabilizer code with \(d\geq 3\)
 and where $D$ is a prime-power. 
 Then
 \begin{equation}
  n \leq D^2 + d - 2 \,.
 \end{equation} 
\end{proposition}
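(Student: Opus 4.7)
The cleanest route is to recognize that this proposition is the stabilizer specialization of Theorem~\ref{thm:QMDS_bound}, which has already been proved for arbitrary QMDS codes. Thus, a one-line modern proof would simply note: every stabilizer QMDS code is in particular a QMDS code, and the conclusion $n\leq D^2+d-2$ follows directly. The plan, however, is to provide a self-contained argument that uses only classical coding theory, matching the approach in Ref.~\cite{1715533}.

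The first step is to pass to the classical side. When $D=q=p^m$ is a prime-power, an $[\![n,k,d]\!]_q$ QMDS stabilizer code corresponds, via the stabilizer formalism recalled just before the statement, to an additive MDS code $\CC\subset\F_{q^2}^n$ with parameters $[n,d-1,n-d+2]_{q^2}$ such that $\CC\subseteq\CC^{\perp_a}$ under the trace-alternating form~\eqref{eq:trace-alternating-form}. The dual $\CC^{\perp_a}$ is then an additive MDS code with parameters $[n,n-d+1,d]_{q^2}$. Everything is now a question about additive MDS codes over the alphabet $\F_{q^2}$.

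Next, I would invoke (or prove) a classical length bound for additive MDS codes: if an $[n,K,d]_Q$ additive MDS code exists with $d\geq 3$, then $n\leq Q+d-2$. For \emph{linear} MDS codes this is a standard shortening/puncturing argument: an $[n,n-d+1,d]_Q$ MDS code puncturing down one coordinate at a time produces MDS codes of parameters $[n-s,n-d+1-s,d-s]_Q$ for each $0\leq s\leq d-2$, ending at an $[n-d+2,2,d-(d-2)]_Q = [n-d+2,2,2]_Q$ MDS code; counting distinct codewords of a $2$-dimensional MDS code over $\F_Q$ forces $n-d+2\leq Q+1$, i.e.\ $n\leq Q+d-2$. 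For additive codes one argues analogously using the projection/shortening of additive codes, which also preserves the MDS property, and uses the fact that the image in any two coordinates must be injective, bounding the length by the size of the alphabet plus the redundancy. Substituting $Q=q^2$ yields $n\leq q^2+d-2$, which is the claimed bound.

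The main obstacle is justifying the classical length bound for \emph{additive} (not only linear) MDS codes, since the standard references often state it for linear codes only. The fix is routine: the shortening/puncturing maps preserve additivity, and the injectivity-on-two-coordinates argument uses only the MDS property (every $k$ coordinates determine a codeword uniquely), not linearity, so the counting $|\CC|\cdot q^{-2(k-1)} \leq$ number of possible pairs of coordinate values goes through for additive codes with only notational changes. Once this classical bound is in hand, the proposition follows without needing any further quantum-coding input.
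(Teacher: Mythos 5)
Your opening observation is, within this paper, already a complete and valid proof: a stabilizer QMDS code is in particular a QMDS code, and Theorem~\ref{thm:QMDS_bound} (whose proof nowhere uses this proposition, so there is no circularity) gives \(n\leq D^2+d-2\) at once. Note that the paper itself offers no proof of the proposition at all --- it is quoted from Ketkar et al.~\cite{1715533} precisely as the stabilizer special case that Theorem~\ref{thm:QMDS_bound} generalizes. Your classical plan, passing to the additive MDS code \(\CC\subset\F_{q^2}^n\) with parameters \([n,d-1,n-d+2]_{q^2}\) and \(\CC\subseteq\CC^{\perp_a}\), is also the right idea and matches the spirit of~\cite{1715533}, but its execution contains a genuine error.

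The flawed step is the reduction chain. Puncturing an \([n,n-d+1,d]_Q\) code yields \([n-1,n-d+1,d-1]_Q\) (dimension unchanged) and shortening yields \([n-1,n-d,d]_Q\) (distance unchanged); the codes \([n-s,n-d+1-s,d-s]_Q\) you write down are produced by neither operation and are not even MDS (they miss the Singleton bound by \(s\)). Your stated endpoint \([n-d+2,2,2]_Q\) is moreover inconsistent with your own dimension count (it would be \(n-2d+3\)), and a dimension-\(2\), distance-\(2\) code gives no length bound whatsoever, so the argument collapses there. The repair is to work with \(\CC\) itself (dimension \(d-1\geq 2\)), not its dual: shorten \(\CC\) at the zero symbol \(d-3\) times --- shortening preserves additivity, and for an MDS code each symbol occurs exactly \(Q^{k-1}\) times in every coordinate, so each step gives an additive MDS code \([n-s,d-1-s,n-d+2]_{q^2}\) --- arriving at an \([n-d+3,\,2,\,n-d+2]_{q^2}\) MDS code. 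One then needs the bound \(m\leq Q+1\) for any (not necessarily linear) \([m,2,m-1]_Q\) MDS code; your ``counting distinct codewords'' does not deliver this (the inequality \(|\CC|\,q^{-2(k-1)}\leq Q^2\) is vacuous). The standard argument: the \(Q\) codewords with first coordinate \(0\) pairwise agree only there, hence exhaust all \(Q\) symbols in each of the other \(m-1\) coordinates; any codeword with a different first coordinate therefore agrees with one of them in every one of those \(m-1\) coordinates, yet with each of them at most once, forcing \(m-1\leq Q\). Applying this with \(Q=q^2\) gives \(n-d+3\leq q^2+1\), i.e.\ \(n\leq q^2+d-2\), as claimed.
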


A pure state \(\ket{\phi_{n,D}}\) of \(n\) parties with local
dimension \(D\) each is called {\em absolutely maximally entangled}
(AME), if maximal entanglement is present across every bipartition.
Consequently, all its reductions to half of its parties are maximally
mixed.  AME states are pure codes with parameters
\((\!(n,1,\nhf+1)\!)_D\).  If \(n\) is even, these states are the
top-most member of a QMDS family, reaching the largest distance
allowed by the quantum Singleton bound. They are also known
as {\em perfect tensors}.  Scott obtained the following bound on the
existence of absolutely maximally entangled states.
\begin{proposition}[Maximal length of AME states, Eq.~$44$ in Ref.~\cite{PhysRevA.69.052330}]
\label{prop:AME_bound}
 Let \(\ket{\phi_{n,D}}\) be an absolutely maximally entangled state of \(n\geq 4\) parties of 
 local dimension \(D\) each. Then
 \begin{equation}
  n \leq \begin{cases}
		2(D^2-1) 		&\text{if $n$ is even}; \\
		2D(D+1) -1 	&\text{if $n$ is odd} \,.
	    \end{cases}
 \end{equation}
\end{proposition}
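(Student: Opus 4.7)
The plan is to split the proof into the even and odd cases, reducing the former directly to Theorem~\ref{thm:QMDS_bound} and handling the latter by a parallel argument. For even $n$, the AME state $\ket{\phi_{n,D}}$ is precisely the top member of a QMDS family: it realises a pure $\QECC(n,0,n/2+1,D)$ code, that is, a QMDS code with $k=0$. The hypothesis $n\geq 4$ forces $d=n/2+1\geq 3$, so Theorem~\ref{thm:QMDS_bound} immediately yields $n=n+k\leq 2(D^2-1)$, which is the claimed bound for even $n$.

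For odd $n$, the AME state realises a code $(\!(n,1,(n+1)/2)\!)_D$ that falls one short of the Singleton bound and is \emph{not} QMDS, so Theorem~\ref{thm:QMDS_bound} does not apply directly. Instead I would mimic its proof with the shifted choice $\a=(n-1)/2$: consider the reductions $\r'$ and $\r''$ of $\r=\dyad{\phi_{n,D}}$ onto any $\a+1$ and $\a+2$ parties, respectively. By the AME property their complements (of sizes $\a$ and $\a-1$) are maximally mixed, so Schmidt decomposition of the pure state $\ket{\phi_{n,D}}$ gives $\tr[\r'^2]=D^{-\a}$ and $\tr[\r''^2]=D^{-(\a-1)}$. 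Since $\ket{\phi_{n,D}}$ has distance $\a+1$, the weights $A_j(\r')$ and $A_j(\r'')$ vanish for $0<j<\a+1$, and both reductions admit Bloch decompositions of exactly the shape used in the proof of Theorem~\ref{thm:QMDS_bound}.

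Extracting the relevant weights is then direct: $\sum_j A_j(\r')=D^{\a+1}\tr[\r'^2]=D$ together with $A_0(\r')=1$ forces $A_{\a+1}(\r')=D-1$. Every $(\a+1)$-body marginal of $\r''$ is itself a $(\a+1)$-body reduction of $\r$ and so shares the same weight profile, contributing $D-1$; hence $A_{\a+1}(\r'')=(\a+2)(D-1)$. Combined with $\sum_{j>0}A_j(\r'')=D^3-1$, the required non-negativity of $A_{\a+2}(\r'')$ becomes
\[
D^3-1-(\a+2)(D-1)\geq 0,
\]
and factoring out $(D-1)$ gives $\a\leq D^2+D-1$, i.e.\ $n=2\a+1\leq 2D(D+1)-1$.

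The main care point I foresee is the symmetry step $A_{\a+1}(\r'')=(\a+2)(D-1)$: one has to verify that every one of the $\a+2$ marginals of $\r''$ of size $\a+1$ has all its weight concentrated at level $\a+1$ with total $D-1$. This is immediate from the AME property---each such marginal has rank exactly $D^\a$, vanishing intermediate weights, and the same total $\sum_j A_j$---but it is the only place the argument goes beyond Theorem~\ref{thm:QMDS_bound}. Once this symmetry is granted, the rest of the odd case is a verbatim rerun of that theorem's proof. Structurally, this is the same argument Scott originally gave in terms of linear entropies in Ref.~\cite{PhysRevA.69.052330}, translated into the weight-enumerator language used throughout this paper.
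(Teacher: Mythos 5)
Your argument is correct. Note, however, that the paper itself does not prove Proposition~\ref{prop:AME_bound}: it is quoted from Scott (Eq.~44 of Ref.~\cite{PhysRevA.69.052330}) as background, with only the remark that for even \(n\) it coincides with the \(k=0\) case of Theorem~\ref{thm:QMDS_bound}. Your even-\(n\) reduction is exactly that observation (an AME state with \(n\) even is a pure \(\QECC(n,0,n/2+1,D)\) code, and \(n\geq 4\) gives \(d\geq 3\)), so that half needs nothing new. The odd-\(n\) half is your own construction, and it checks out: with \(\a=(n-1)/2\) the complements of the chosen \((\a+1)\)- and \((\a+2)\)-party reductions have sizes \(\a\) and \(\a-1\), so the AME property plus Schmidt decomposition gives \(\tr[\r'^2]=D^{-\a}\), \(\tr[\r''^2]=D^{-(\a-1)}\), and the vanishing of all intermediate weights; the totals \(\sum_j A_j(\r')=D\) and \(\sum_j A_j(\r'')=D^{3}\) then force \(A_{\a+1}(\r')=D-1\). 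The symmetry step you flag is indeed the only place extra care is needed, and your justification is the right one: the fine-grained weight \(\AA_S(\r'')\) for \(|S|=\a+1\) depends only on \(\tr_{S^c}(\r'')\), which is itself an \((\a+1)\)-body reduction of the AME state, so each of the \(\a+2\) subsets contributes exactly \(D-1\), giving \(A_{\a+1}(\r'')=(\a+2)(D-1)\) and hence \(D^3-1-(\a+2)(D-1)\geq 0\), i.e.\ \(n=2\a+1\leq 2D(D+1)-1\). This is, as you say, Scott's purity/linear-entropy argument recast in the Shor--Laflamme weight language of Theorem~\ref{thm:QMDS_bound}; what your version buys is a uniform treatment of both parities within the paper's enumerator framework, at the cost of redoing the Bloch-decomposition bookkeeping that Theorem~\ref{thm:QMDS_bound} only states for the QMDS (even, \(k\geq 0\)) normalization.
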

Thus for \(n\) even, Proposition~\ref{prop:AME_bound} is indeed a bound on the existence of QMDS codes that have \(k=0\).

\section{Known constructions and tables}\label{app:constructions_tables}

We list parameters of some known QMDS constructions in
Table~\ref{tab:constructions}.  Tables~\ref{tab:bounds_D=3} to
\ref{tab:bounds_D=5} report on the highest distances within a QMDS
family that are not excluded by our bounds, as well as on the
parameters that can be reached by known constructions.  
All upper bounds listed arise from the shadow inequalities
(Corollary~\ref{cor:shadow_bound}). For local dimensions \(D > 5\), 
these constraints do not seem to be stronger than
those of Theorem~\ref{thm:QMDS_bound} and thus our tables only
include codes up to~\(D=5\).

Should the upper and lower bound meet, the corresponding code is
\emph{optimal} and specifies its QMDS family completely; these entries
are marked by~\(\ast\)\,.  Since all currently known non-trivial
constructions are stabilizer codes, we use the notation
$[\![n,k,d]\!]_q$ for both the lower and upper bound.  The upper bound
is valid for general codes $(\!(n, D^k, d)\!)_q$ as well.
\vskip12mm

%trick to get desired placement of floats
\onecolumngrid
\
\rule{0pt}{15mm}

\
\twocolumngrid

%%% D = 3
\begin{table}[ht]\tabcolsep0.5\tabcolsep\small
\begin{tabular}{@{}c@{\quad}lllll@{}}
\hline
\(\mathbf{n+k}\) 	& \textbf{upper} & \textbf{lower} &&&\\
\hline
\( 4 \)     &\([\![ 4 , 0 , 3 ]\!]_ 3 \)    &\([\![ 4 , 0 , 3 ]\!]_3 \)  &$\ast$&& Hermitean \\
\( 6 \)     &\([\![ 6 , 0 , 4 ]\!]_ 3 \)    &\([\![ 6 , 0 , 4 ]\!]_3 \)  &$\ast$&& Rains~\cite{782103}\\
\( 8 \)     &\([\![ 6 , 2 , 3 ]\!]_ 3 \)    &\([\![ 6 , 2 , 3 ]\!]_3 \)  &$\ast$&& single-error \\
\( 10 \)    &\([\![ 10 , 0 , 6 ]\!]_ 3 \)   &\([\![ 10, 0 , 6 ]\!]_3 \)  &$\ast$&& Glynn code~\cite{GLYNN198643}\\
\( 12 \)    &\([\![ 8 , 4 , 3 ]\!]_ 3 \)    &\([\![ 8 , 4 , 3 ]\!]_3 \)  &$\ast$&& single-error \\
\( 14 \)    &\([\![ 11 , 3 , 5 ]\!]_ 3 \)   &\([\![ 10, 4 , 4 ]\!]_3 \)  &      && Grassl/Rötteler I\\
\( 16 \)    &\([\![ 11 , 5 , 4 ]\!]_ 3 \)   &\([\![ 10, 6 , 3 ]\!]_3 \)  &      && single-error \\ 
\hline
\end{tabular}
\caption{\label{tab:bounds_D=3} Upper and lower bounds for the highest distance in QMDS families of local dimension~\(D=3\).}
\end{table}

%%% D = 4
\begin{table}[htbp]\tabcolsep0.5\tabcolsep\small
\begin{tabular}{@{}c@{\quad}lllll@{}}
\hline
\(\mathbf{n+k}\) 	& \textbf{upper} & \textbf{lower}	&&\\
\hline
\( 4 \)     &\([\![ 4   , 0 , 3 ]\!]_4 \)   &\([\![ 4  , 0 , 3 ]\!]_4 \)    &$\ast$ & Hermitean \\
\( 6 \)     &\([\![ 6   , 0 , 4 ]\!]_4 \)   &\([\![ 6  , 0 , 4 ]\!]_4 \)    &$\ast$ & Rains~\cite{782103} \\
\( 8 \)     &\([\![ 8  , 0 , 5]\!]_4 \)     &\([\![ 6  , 2 , 3 ]\!]_4 \)    && single-error\\
\( 10 \)    &\([\![ 10 , 0 , 6 ]\!]_4 \)    &\([\![ 10 , 0 , 6 ]\!]_4 \)    &$\ast$ & Gulliver et al. \cite{4608969} \\
\( 12 \)    &\([\![ 10 , 2 , 5 ]\!]_4 \)    &\([\![ 9  , 3 , 4 ]\!]_4 \)    && Grassl/Rötteler~\cite{7282626}\\
\( 14 \)    &\([\![ 14 , 0 , 8 ]\!]_4 \)    &\([\![ 10 , 4 , 4 ]\!]_4 \)    && shortening $[\![18,12,4]\!]_4$\\
\( 16 \)    &\([\![ 13 , 3 , 6 ]\!]_4 \)    &\([\![ 11 , 5 , 4 ]\!]_4 \)    && Grassl/Rötteler~\cite{7282626}\\
\( 18 \)    &\([\![ 18 , 0 , 10]\!]_4 \)    &\([\![ 12 , 6 , 4 ]\!]_4 \)    && shortening $[\![18,12,4]\!]_4$\\
\( 20 \)    &\([\![ 16 , 4 , 7 ]\!]_4 \)    &\([\![ 12 , 8 , 3 ]\!]_4 \)    && single-error \\
\( 22 \)    &\([\![ 22 , 0 , 12]\!]_4 \)    &\([\![ 14 , 8 , 4 ]\!]_4 \)    && shortening $[\![18,12,4]\!]_4$ \\
\( 24 \)    &\([\![ 19 , 5 , 8 ]\!]_4 \)    &\([\![ 14 , 10, 3 ]\!]_4 \)    && single-error \\
\( 26 \)    &\([\![ 23 , 3 , 11 ]\!]_4 \)   &\([\![ 17 , 9 , 5 ]\!]_4 \)    && Grassl/Rötteler I\\
\( 28 \)    &\([\![ 22 , 6 , 9 ]\!]_4 \)    &\([\![ 16 , 12, 3 ]\!]_4 \)    && single-error \\
\( 30 \)    &\([\![ 26 , 4 , 12 ]\!]_4 \)   &\([\![ 18 , 12, 4 ]\!]_4 \)    && Grassl/Rötteler II\\
\hline
\end{tabular}
\caption{\label{tab:bounds_D=4} Upper and lower bounds for the highest distance in QMDS families of local dimension~\(D = 4\).}
\end{table}

%%% D = 5
\begin{table}[htbp]\tabcolsep0.5\tabcolsep\small
\begin{tabular}{@{}c@{\quad}lllll@{}}
\hline
\(\mathbf{n+k}\) 	& \textbf{upper} & \textbf{lower} & &\\
\hline
\( 4 \)     &\([\![ 4 , 0 , 3 ]\!]_5 \)   	 &\([\![ 4 , 0 , 3 ]\!]_5 \)  &$\ast$& Hermitean\\ 
\( 6 \)     &\([\![ 6 , 0 , 4 ]\!]_5 \)   	 &\([\![ 6 , 0 , 4 ]\!]_5 \)  &$\ast$& Rains~\cite{782103}\\
\( 8 \)     &\([\![ 8 , 0 , 5 ]\!]_5 \)   	 &\([\![ 8 , 0 , 5 ]\!]_5 \)  &$\ast$& Kim/Lee~\cite{KimLee04}\\
\( 10 \)    &\([\![ 10 , 0 , 6 ]\!]_5 \)   	 &\([\![ 10 , 0 , 6 ]\!]_5 \) &$\ast$& Kim/Lee~\cite{KimLee04}\\
\( 12 \)    &\([\![ 12 , 0 , 7 ]\!]_5 \)   	 &\([\![ 10 , 2 , 5 ]\!]_5 \) && shortening $[\![26,18,5]\!]_5$\\
\( 14 \)    &\([\![ 14 , 0 , 8 ]\!]_5 \)   	 &\([\![ 14 , 0 , 8 ]\!]_5 \) &$\ast$& Ball~\cite{Ball2019}\\
\( 16 \)    &\([\![ 16 , 0 , 9 ]\!]_5 \)   	 &\([\![ 12 , 4 , 5 ]\!]_5 \) && shortening $[\![26,18,5]\!]_5$\\
\( 18 \)    &\([\![ 18 , 0 , 10 ]\!]_5 \)   	 &\([\![ 18 , 0 , 10]\!]_5 \) &$\ast$& Ball~\cite{Ball2019}\\
\( 20 \)    &\([\![ 20 , 0 , 11 ]\!]_5 \)   	 &\([\![ 14 , 6 , 5 ]\!]_5 \) && shortening $[\![26,18,5]\!]_5$\\
\( 22 \)    &\([\![ 22 , 0 , 12 ]\!]_5 \)   	 &\([\![ 15 , 7 , 5 ]\!]_5 \) && shortening $[\![26,18,5]\!]_5$\\
\( 24 \)    &\([\![ 24 , 0 , 13 ]\!]_5 \)   	 &\([\![ 16 , 8 , 5 ]\!]_5 \) && shortening $[\![26,18,5]\!]_5$\\
\( 26 \)    &\([\![ 26 , 0 , 14 ]\!]_5 \)   	 &\([\![ 17 , 9 , 5 ]\!]_5 \) && shortening $[\![26,18,5]\!]_5$\\
\( 28 \)    &\([\![ 26 , 2 , 13 ]\!]_5 \)   	 &\([\![ 18 , 10, 5 ]\!]_5 \) && shortening $[\![26,18,5]\!]_5$\\
\( 30 \)    &\([\![ 30 , 0 , 16 ]\!]_5 \)   	 &\([\![ 19 , 11, 5 ]\!]_5 \) && shortening $[\![26,18,5]\!]_5$\\
\( 32 \)    &\([\![ 30 , 2 , 15 ]\!]_5 \)   	 &\([\![ 20 , 12, 5 ]\!]_5 \) && shortening $[\![26,18,5]\!]_5$\\
\( 34 \)    &\([\![ 34 , 0 , 18 ]\!]_5 \)   	 &\([\![ 21 , 13, 5 ]\!]_5 \) && shortening $[\![26,18,5]\!]_5$\\
\( 36 \)    &\([\![ 34 , 2 , 17 ]\!]_5 \)   	 &\([\![ 22 , 14, 5 ]\!]_5 \) && shortening $[\![26,18,5]\!]_5$\\
\( 38 \)    &\([\![ 38 , 0 , 20 ]\!]_5 \)   	 &\([\![ 23 , 15, 5 ]\!]_5 \) && shortening $[\![26,18,5]\!]_5$\\
\( 40 \)    &\([\![ 37 , 3 , 18 ]\!]_5 \)   	 &\([\![ 24 , 16, 5 ]\!]_5 \) && shortening $[\![26,18,5]\!]_5$\\
\( 42 \)    &\([\![ 42 , 0 , 22 ]\!]_5 \)   	 &\([\![ 26 , 16, 6 ]\!]_5 \) && Grassl/Rötteler I\\
\( 44 \)    &\([\![ 41 , 3 , 20 ]\!]_5 \)   	 &\([\![ 26 , 18, 5 ]\!]_5 \) && Grassl/Rötteler I\\
\( 46 \)    &\([\![ 46 , 0 , 24 ]\!]_5 \)   	 &\([\![ 26 , 20, 4 ]\!]_5 \) && Grassl/Rötteler I\\
\( 48 \)    &\([\![ 45 , 3 , 22 ]\!]_5 \)   	 &\([\![ 26 , 22, 3 ]\!]_5 \) && single-error\\
\hline
\end{tabular}
\caption{\label{tab:bounds_D=5} Upper and lower bounds for the 
highest distance in QMDS families of local dimension~\(D = 5\).}
\end{table}

%%%%%%%%%%%%%%%
\bibliographystyle{apsrev4-1}
\bibliography{current_bib}
\onecolumngrid %balance the columns on the final page
\end{document}